\documentclass{amsart}
\usepackage[left=2cm, right=2cm]{geometry}
\usepackage[utf8]{inputenc}
\usepackage{amsfonts, amsthm, amssymb, mathtools,etoolbox}
\usepackage{blindtext}
\usepackage[colorlinks=true, urlcolor=blue, linkcolor=blue, citecolor=blue]{hyperref}
\usepackage{tikz,tikz-cd}
\usepackage{array}
\usepackage{cleveref}
\usepackage[style=alphabetic,sorting=nyt, maxnames=100, minnames=100]{biblatex}
\renewbibmacro{in:}{}
\addbibresource{CommonBiblio20240922.bib}
\tikzset{pullback/.style={minimum size=1.2ex,path picture={
\draw[opacity=1,black,-,#1] (-0.5ex,-0.5ex) -- (0.5ex,-0.5ex) -- (0.5ex,0.5ex);%
}}}

\theoremstyle{plain}
\newtheorem{theorem}{Theorem}[section]
\newtheorem{proposition}[theorem]{Proposition}
\newtheorem{lemma}[theorem]{Lemma}
\newtheorem{corollary}[theorem]{Corollary}

\theoremstyle{definition}

\newtheorem{definition}[theorem]{Definition}
\newtheorem{remark}[theorem]{Remark}
\newtheorem{notation}[theorem]{Notation}

\newcommand{\dq}[1]{``#1"}

\newcommand{\para}[1]{\paragraph{\textbf{#1}}}

\newcommand{\C}{\mathcal{C}}

\newcommand{\E}{\mathcal{E}}
\newcommand{\F}{\mathcal{F}}
\newcommand{\id}{\mathrm{id}}
\newcommand{\ob}{\mathrm{ob}}
\newcommand{\op}{\mathrm{op}}
\newcommand{\Set}{\mathbf{Set}}
\newcommand{\FinSet}{\mathbf{FinSet}}
\newcommand{\BoolAlg}{\mathbf{BoolAlg}}
\newcommand{\Cont}{\mathbf{Cont}}
\newcommand{\PSh}{\mathbf{PSh}}
\newcommand{\Sh}{\mathbf{Sh}}
\newcommand{\DFA}{\mathrm{DFA}}
\newcommand{\Coalg}{\mathbf{Coalg}}

\newcommand{\Image}{\mathrm{Im}}

\newcommand{\End}{\mathrm{End}}

\newcommand{\Cl}{\mathrm{Clopen}}

\newcommand{\A}{\Sigma}
\newcommand{\MA}{{{\Sigma}^{\ast}}}
\newcommand{\proMA}{\widehat{\MA}}

\newcommand{\Aset}{\A\text{-}\Set}

\newcommand{\Atmt}{\mathbf{Atmt}}
\newcommand{\Lan}{\mathcal{L}}
\newcommand{\Reg}{\mathcal{R}}

\newcommand{\Pow}{\mathcal{P}}

\newcommand{\of}{\mathrm{o.f.}}
\newcommand{\pof}{{p_{\of}}}

\newcommand{\ofAtmt}{\Atmt_{\of}}
\newcommand{\fAset}{\A\text{-}\FinSet}
\newcommand{\ofAset}{{\Aset}_{\of}}
\newcommand{\AFinMon}{\A\text{-}\mathbf{FinMon}}
\newcommand{\empword}{\varepsilon}

\newcommand{\bool}{\{\top, \bot\}}
\newcommand{\CoE}{{\int} \hspace{-2pt}}
\newcommand{\CoL}{{\CoE \Lan}}

\newcommand{\demph}[1]{\textit{\textbf{#1}}}

\font\maljapanese=dmjhira at 2.5ex
\newcommand{\yo}{\textrm{\!\maljapanese\char"48}}

\newcommand{\ADJ}[4]
    {
    \begin{tikzcd}[ampersand replacement = \&, column sep = small]
        {#1}
        \ar[rr, shift right=1.3ex, "{#2}"']
        \&\perp\&
        {#3}
        \ar[ll, shift right=1.3ex,"{#4}"']
    \end{tikzcd}
    }

\title[Topoi of automata I]{Topoi of automata I:\\ Four topoi of automata and regular languages}
\author{Ryuya Hora}
\thanks{Graduate School of Mathematical Sciences, University of Tokyo. \url{hora@ms.u-tokyo}}
\subjclass[2020]{18F10, 68Q70, 20M35, 18B20}
\keywords{Automaton, topos, regular language, coalgebra, finite monoid, profinite word, Myhill-Nerode theorem}

\begin{document}
\begin{abstract}
Both topos theory and automata theory are known for their multi-faceted nature and relationship with topology, algebra, logic, and category theory. This paper aims to clarify the topos-theoretic aspects of automata theory, particularly demonstrating through two main theorems how regular (and non-regular) languages arise in topos-theoretic calculation. First, it is shown that the four different notions of automata form four types of Grothendieck topoi, illustrating how the technical details of automata theory are described by topos theory. Second, we observe that the four characterizations of regular languages (DFA, Myhill-Nerode theorem, finite monoids, profinite words) provide Morita-equivalent definitions of a single Boolean-ringed topos, situating this within the context of Olivia Caramello’s ‘Toposes as Bridges.’ 

This paper also serves as a preparation for follow-up papers, which deal with the relationship between hyperconnected geometric morphisms and algebraic/geometric aspects of formal language theory.
\end{abstract}
\maketitle
\tableofcontents

\section{Introduction}


This series of papers aims to propose a topos-theoretic framework for automata theory with the following future goals:
\begin{itemize}
    \item to unify aspects of automata theory in terms of topoi, and
    \item to introduce geometric methods into automata theory.
\end{itemize}

The connection between category theory and automata theory is a richly historic area.
There are a vast number of studies on the connection between category theory and automata theory, including \cite{adamek1974free, jacobs2017introduction, rutten2019method, colcombet2020automata, goy2022powerset}, and also connections between topos theory and automata theory
\cite{lawvere2004functorial, uramoto2017semi, goy2022powerset, boccali2023semibicategory, iwaniack2024automata}. 

As far as the author knows, the novelty of this paper is to consider \demph{the topoi (consisting) of automata} (not automata in topoi or topoi constructed from automata-theoretic gadgets.) Our starting point is the following fact: the category of automata (defined as a coalgebra $Q \to Q^{\A}\times \{\top, \bot\}$) is a presheaf topos (over the category of languages). (see \cref{cor:FourDescriptionOfAtmt}). In this series of papers, we will provide various \dq{Grothendieck topoi of automata}, which can be regarded as variants of this topos. Some of them are presheaf topoi, but some are not.


The structure of this first paper is as follows:
\begin{description}
    \item[{\cref{sec:FourTopoiOfAutomata}}] Introducing four topoi of automata.
    \item[\cref{sec:reglan}] Proving that four characterizations of regular languages 
    provide four descriptions of a single boolean-ringed topos $(\ofAset, \Reg)$.
\end{description}

\para{1. Introducing four topoi of automata.}
In \cref{sec:FourTopoiOfAutomata}, introducing and calculating four topoi of automata, we will see some automata-theoretic topics naturally arise in our approach of `topoi of automata.' Those include language recognition, coalgebraic treatment, automata minimalization, the quotient of language, and regular languages. (See \cref{tab:CorrespondenceTableAset} \cref{tab:CorrespondenceTableAtmt}, and 
\cref{tab:CorrespondenceTableofAset}, though some rows in the tables will be treated in the follow-up papers).

\begin{table}[ht]
    \centering
    \begin{tabular}{ccc} 
         topos theory&  & automata theory\\ \hline 
 sheaf in $\Aset$& $\leftrightsquigarrow$&word action\\ \hline 
 point of $\Aset$& $\leftrightsquigarrow$&infinite word\\ \hline 
 The canonical point $p$ of $\Aset$& $\leftrightsquigarrow$&Run of Moore machine\\ \hline
 The internal Boolean algebra $p_{\ast}\bool$& $\leftrightsquigarrow$&Boolean algebra of languages\\  \hline
 Path action on $p_{\ast}\bool$& $\leftrightsquigarrow$&Quotient of language\\ \hline
Morphism to $p_{\ast}\bool$&  $\leftrightsquigarrow$& automaton $=$ $2x^{\A}$-coalgebra\\ \hline 
 Image of the Yoneda map $\yo(\ast) \to \Lan$&$\leftrightsquigarrow$ &minimal automata\\ \hline
 hyperconnected quotient $\Aset \to \ofAset$&$\leftrightsquigarrow$ &Regular languages\\ \hline
  Generated hyperconnected quotient&$\leftrightsquigarrow$ &Syntactic monoid\\ \hline
    \end{tabular}
    \caption{Some correspondence on $\Aset$}
    \label{tab:CorrespondenceTableAset}
\end{table}

\begin{table}[ht]
    \centering
    \begin{tabular}{ccc} 
         topos theory&  & automata theory\\ \hline 

         sheaf in $\Atmt$&  $\leftrightsquigarrow$& automaton $=$ $2x^{\A}$-coalgebra\\ \hline 
 Structure map of an \'{e}tale space& $\leftrightsquigarrow$&language recognition $=$ coinduction \\ \hline
 \'{e}tale covering $\Atmt \twoheadrightarrow \Aset$& $\leftrightsquigarrow$&Forgetting the accept states\\ \hline 
 essential point of $\Atmt$& $\leftrightsquigarrow$&a language\\ \hline 
 Open subtopos of $\Atmt$& $\leftrightsquigarrow$&a quotient-stable language class\\ \hline 
         
    \end{tabular}
    \caption{Some correspondence on $\Atmt$}
    \label{tab:CorrespondenceTableAtmt}
\end{table}

\begin{table}[ht]
    \centering
    \begin{tabular}{ccc} \hline 
         the canonical Boolean algebra&  $\leftrightsquigarrow$& Boolean algebra of regular languages\\ \hline 
         The ringed site $(\fAset, J), \DFA$&  $\leftrightsquigarrow$& recognition by DFA\\ \hline 
         The ringed site $(\AFinMon,J), \Pow$&  $\leftrightsquigarrow$& recognition by finite monoids\\ \hline
 The topological monoid action $\ofAset \simeq \Cont(\proMA)$& $\leftrightsquigarrow$&profinite words description\\\hline
    \end{tabular}
    \caption{Some correspondence on $\ofAset$}
    \label{tab:CorrespondenceTableofAset}
\end{table}

\para{2. Proving that four characterizations of regular languages 
    provide four descriptions of a single boolean-ringed topos $(\ofAset, \Reg)$.}

In \cref{sec:reglan}, we will deal with regular languages. Regular languages are a class of languages defined by certain finiteness properties and are known to have many characterizations (see \cref{fig:FourPOV}):
\begin{itemize}
    \item  They are accepted by finite automata.
    \item They are recognized by finite monoids.
    \item They are (pullbacks of) clopen sets of profinite words.
    \item Their corresponding Nerode congruence has only finitely many equivalence classes.
\end{itemize}

\begin{figure}[ht]

\begin{center}
\begin{tikzpicture}[scale=1]

\node[circle, draw, minimum size=1.5cm] (A) at (-4, 4) {DFA};
\node[circle, draw, minimum size=1.5cm] (B) at (4, 4) {finite monoids};
\node[circle, draw, minimum size=1.5cm] (C) at (-4, -4) {Myhill-Nerode};
\node[circle, draw, minimum size=1.5cm] (D) at (4, -4) {profinite words};

\node[circle, very thick, draw, minimum size=2.5cm] (Center) at (0, 0) {The ringed topos $(\ofAset, \Reg)$};

\draw[->] (A) -- (Center) node[midway, below left] {ringed site $(\fAset, \DFA)$};
\draw[->] (B) -- (Center) node[midway, below right] {ringed site $(\AFinMon, \Pow)$};
\draw[->] (C) -- (Center) node[midway, above left] {hyperconnected geometric morphism};
\draw[->] (D) -- (Center)  node[midway, above right] {$\Cont\left(\proMA\right )$};

\end{tikzpicture}
\end{center}
\caption{Four characterizations of regular languages are Morita equivalent.}
\label{fig:FourPOV}
\end{figure}
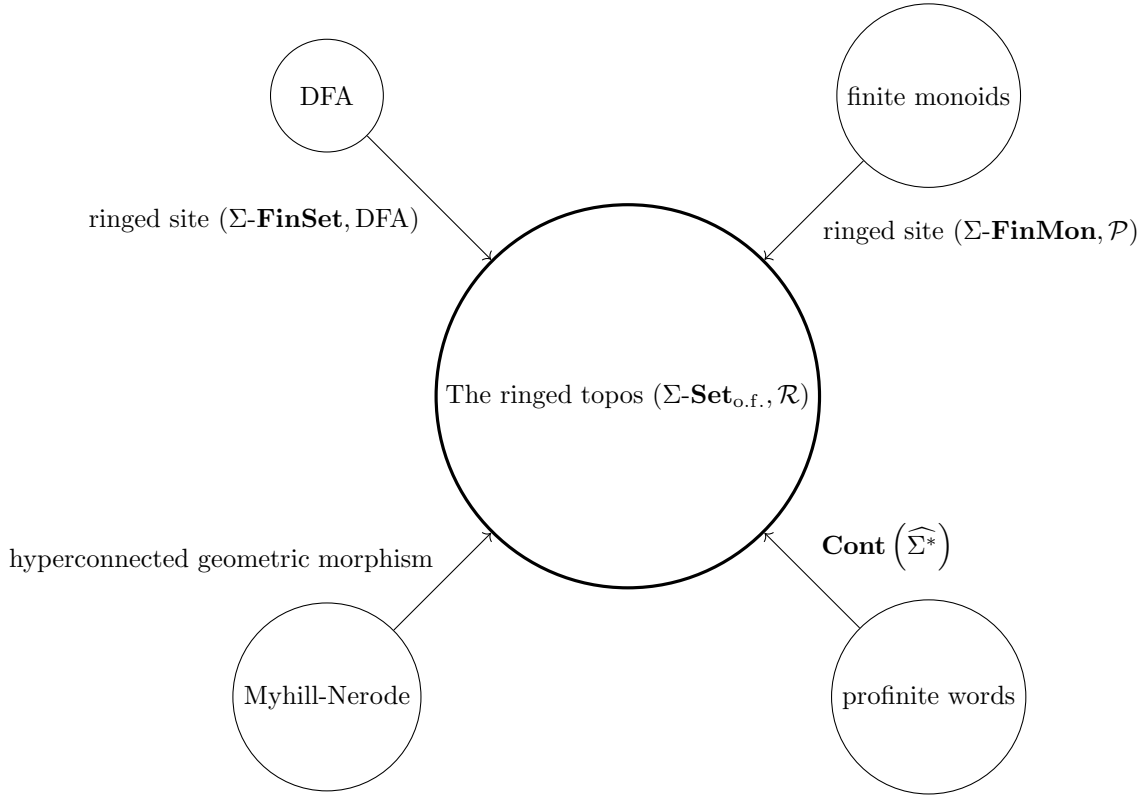

We show that these data are \textbf{Morita equivalent}, 
in the sense that we construct (a priori four) Boolean-ringed topoi from these four data and prove that they are equivalent. The author regards this as an example of Olivia Caramello's slogan of `toposes as bridges' (see \cite{caramello2023unification}), at least in a broader sense. This unified perspective demonstrates that the diverse views on regular languages can be interpreted as a single multifaceted topos.


\para{On the follow-up papers}
The contents of the follow-up papers include how points of the topoi categorify infinite words and how the complete lattice of hyperconnected quotients generalizes classes of languages and corresponding syntactic monoids.

\para{Acknowledgement}

The author would like to thank his supervisor, Ryu Hasegawa, for his continuous and helpful advice.
I would like to thank Takeo Uramoto for his advice and for suggesting a connection with the variety theorem, Yuhi Kamio for his enlightening explanation of algebraic language theory, Morgan Rogers for the discussion on automata as topological monoid actions, Victor Iwaniack for topos theoretic automata theory, and Ryoma Sin'ya for his fascinating introduction to the field of automata.

I would like to extend my gratitude to Keisuke Hoshino, Takao Yuyama, Yusuke Inoue, Isao Ishikawa, Yuzuki Haga, David Jaz Myers, Ivan Tomasic, Igor Bakovic, and Joshua Wrigley for their helpful and encouraging discussions.

This research is supported by FoPM, WINGS Program, the University of Tokyo.

\begin{notation}
\label{notation:Alphabet}
    In this note, we will fix a finite set of alphabet $\A$. Let $\MA$ denote the set of all words, i.e., the free monoid over the set $\A$. 
\end{notation}

\section{Four topoi of automata}\label{sec:FourTopoiOfAutomata}
This section aims to introduce four topoi of automata
\begin{description}
    \item[$\Aset$] The topos of word actions (\cref{ssec:toposofWordActions}) 
    \item[$\Atmt$] The topos of (coalgebraic) automata (\cref{ssec:ToposOfAutomata})
    \item[$\ofAset$] The topos of orbit-finite word actions (\cref{ssec:ToposOfofAset})
    \item[$\ofAtmt$] The topos of orbit-finite automata (\cref{ssec:toposOfLFatmt})
\end{description}
and to provide a theoretical framework for the following sections and the follow-up papers.
\subsection{\texorpdfstring{$\Aset$}{Aset}: The presheaf topos of word actions }
\label{ssec:toposofWordActions}

This subsection focuses on the simplest topos in this paper, the topos of word actions $\Aset$. 
Although the topos might seem too trivial to be interesting, we will observe that this topos has depth in its simplicity and naturally includes aspects of formal language theory.

\subsubsection{Word actions form a topos}
As mentioned in \cref{notation:Alphabet}, the alphabet $\Sigma$ is fixed throughout the paper.

\begin{definition} We adopt the following definitions and terminologies.
\begin{itemize}
    \item A \demph{$\A$-set} is a (possibly infinite) set $Q$ equipped with a function $\delta \colon Q \times \A \to Q$.
    \item An element of $Q$ is called a \demph{state}, and the associated function $\delta$ is called the \demph{transition function}.
    \item A morphism of $\A$-sets $f\colon (Q_1, \delta_1) \to (Q_2, \delta_2)$ is a function $f\colon Q_1 \to Q_2$ that commutes with their transition functions.
    \[
    \begin{tikzcd}
        Q_1 \times \A \ar[r, "\delta_1"]\ar[d,"f\times \id_{\A}"]& Q_1\ar[d,"f"] \\
        Q_2 \times \A \ar[r, "\delta_2"]& Q_2
    \end{tikzcd}
    \]
    \item The category of $\A$-sets is denoted by $\Aset$.
\end{itemize}
\end{definition}

The following (seemingly boring) proposition is our starting point:
\begin{proposition}
    The category of $\A$-sets  is equivalent to category of right $\MA$-actions \[\Aset \simeq \PSh(\MA).\] In particular, it is a presheaf topos (and hence, a Grothendieck topos).
\end{proposition}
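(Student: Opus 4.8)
The plan is to exhibit the equivalence $\Aset \simeq \PSh(\MA)$ directly by unwinding both sides. Recall that $\PSh(\MA)$ is the category of presheaves on the one-object category $\MA$ (the free monoid $\MA = \Sigma^\ast$ regarded as a category), which is precisely the category of right $\MA$-sets: a set $Q$ together with an action $Q \times \MA \to Q$, $(q, w) \mapsto q \cdot w$, satisfying $q \cdot \empword = q$ and $q \cdot (vw) = (q \cdot v) \cdot w$, with equivariant functions as morphisms. So the statement reduces to a comparison between $\A$-sets and right $\MA$-sets.

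First I would construct the functor $\Aset \to \PSh(\MA)$. Given a $\A$-set $(Q, \delta)$, define $q \cdot a := \delta(q, a)$ for $a \in \A$ and extend along words by $q \cdot \empword := q$ and $q \cdot (wa) := (q \cdot w) \cdot a$; equivalently, $q \cdot (a_1 \cdots a_n) = \delta(\delta(\cdots \delta(q, a_1)\cdots), a_n)$. One checks this is a monoid action (associativity follows by induction on word length, using that concatenation in $\MA$ is, by construction, generated by the letters). A morphism of $\A$-sets commutes with $\delta$, hence by induction commutes with the action of every word, so it is $\MA$-equivariant; this makes the assignment functorial. Conversely, from a right $\MA$-action $(Q, \cdot)$ define $\delta(q, a) := q \cdot a$, using the inclusion $\A \hookrightarrow \MA$ of generators; an equivariant map certainly commutes with the action of single letters, giving a functor $\PSh(\MA) \to \Aset$.

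Then I would check these two functors are mutually inverse. Going $\Aset \to \PSh(\MA) \to \Aset$ returns $\delta$ on the nose since $q \cdot a = \delta(q,a)$ by definition. Going $\PSh(\MA) \to \Aset \to \PSh(\MA)$ recovers the original action because the action of an arbitrary word $w = a_1 \cdots a_n$ is forced: $q \cdot w = (\cdots(q \cdot a_1)\cdots) \cdot a_n$ by associativity of the original action, which is exactly what the reconstructed action does. On morphisms both round-trips are the identity since the underlying functions are unchanged. Hence the functors form an isomorphism (a fortiori equivalence) of categories. Finally, since $\PSh(\MA) = [\MA^{\op}, \Set]$ is a presheaf category, it is a Grothendieck topos, which gives the last sentence.

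The only place requiring care — though it is hardly an obstacle — is the bookkeeping in the inductive argument that the extension of $\delta$ along words is well-defined and associative, and the dual observation that a right $\MA$-action is uniquely determined by its restriction to the generators $\A$. This is precisely the universal property of the free monoid $\MA$ on $\A$: monoid homomorphisms out of $\MA$ (in particular, the action homomorphism $\MA \to \End(Q)^{\op}$ into the endomorphism monoid) correspond bijectively to functions on $\A$. Phrasing the argument through this universal property makes both the well-definedness and the uniqueness immediate and avoids any explicit induction, so I would present the proof that way: $\Set$-valued actions of $\MA$ are maps $\MA \to \End(Q)$ of monoids, which by freeness are the same as functions $\A \to \End(Q)$, i.e. functions $Q \times \A \to Q$, naturally in $Q$.
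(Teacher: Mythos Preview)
Your proof is correct and follows essentially the same route as the paper: extend $\delta$ along words by iterated application, observe this defines a right $\MA$-action, and note that this construction is invertible. The paper's own proof is a one-line sketch of exactly this, so your version simply fills in the details (and the free-monoid universal property packaging you suggest at the end is a clean way to say the same thing).
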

\begin{proof}
    For a $\A$-set $(Q, \delta)$, the right action of a word $w = a_1 \dots a_n$ on $q\in Q$ is defined by \[qw \coloneqq \delta(\dots\delta(\delta(q,a_1), a_2), \dots a_n).\]
    This construction naturally induces an equivalence of categories $\Aset \to \PSh(\MA)$.
\end{proof}

\begin{remark}[Studies on the topos of word actions.]
    This topos has been studied from several points of view. Of course, this topos is an example of topoi of (topological or discrete) monoid actions \cite{rogers2019toposes,rogers2023toposes}. Even the case where $\A $ is a singleton, has been of interest \cite{lawvere2009conceptual, tomasic2020topos,hora2024quotient}. The author also studied this topos (where $\A$ is infinite) from the viewpoint of the study of quotient topoi \cite{kamio2024solution}.
\end{remark}

\subsubsection{The canonical point and the internal Boolean algebra of languages}
\label{sssec:CanonicalPpointSandBooleanAlgebraofLanguages}
This sub-subsection explains how the notion of languages arises from the topos $\Aset$. This also serves as a preparation for \cref{ssec:ToposOfAutomata}.
\begin{definition}We adopt the following definitions:
\begin{itemize}
    \item A \demph{language} is a subset of the free monoid $\MA$.
    \item The set of languages $\Pow(\MA)$ will be denoted by $\Lan$.
    \item The action $\delta \colon \Lan \times \A \to \Lan$ is defined by the \demph{left quotient}
    \[
    \delta(L,a)\coloneqq \{v\in \MA\mid av\in L\},
    \]
    and $\delta(L,a) $ is denoted by $a^{-1} L$ \footnote{In this paper, terminologies and notations in automata theory are basically borrowed from \cite{pin2020mathematical}.}.
\end{itemize}
\end{definition}

Recall that a \demph{point} of a Grothendieck topos $\E$ is defined as a geometric morphism from the topos $\Set \simeq \Sh(1)$ to the topos $\E$. A \demph{pointed topos} is a Grothendieck topos $\E$ equipped with a point $p\colon\Set \to \E$.

We will show that the notion of languages naturally arises from the topos $\Aset$, using the next general lemma (see also \cite[][Lemma 2.3]{rogers2023toposes}):
\begin{lemma}[Canonical boolean algebra in a pointed topos]
    \label{lem:BooleanAlgebrainPointedTopos}
    For a pointed topos $p\colon \Set \to \E$, the Boolean operations on $\{\top, \bot\}$ induces an internal Boolean algebra structure on the object $p_{\ast}\{\top, \bot\}$ in $\E$.
\end{lemma}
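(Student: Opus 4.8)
The plan is to exploit that the theory of Boolean algebras is a finitary, purely equational (algebraic) theory, so that its models are preserved by any functor between finitely-complete categories that preserves finite products — and the direct image $p_{\ast}$ of a geometric morphism preserves all finite limits, in particular finite products and the terminal object.

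First I would fix the presentation of an internal Boolean algebra in a category $\C$ with finite products: an object $B$ together with morphisms $\wedge,\vee\colon B\times B\to B$, $\neg\colon B\to B$, and global elements $\top,\bot\colon 1\to B$, subject to a finite list of axioms (commutativity, associativity, absorption, distributivity, the two complement laws), each of which is the assertion that a diagram built solely out of finite products of $B$ and these structure maps commutes. The two-element set $\bool$ carries such a structure in $\Set$ in the evident way. Now, since $p\colon\Set\to\E$ is a geometric morphism, $p_{\ast}$ is left exact; hence the canonical comparison morphisms $p_{\ast}(1)\xrightarrow{\ \sim\ }1$ and $p_{\ast}(\bool\times\bool)\xrightarrow{\ \sim\ }p_{\ast}\bool\times p_{\ast}\bool$ (and, more generally, the comparisons for all finite products of copies of $\bool$) are isomorphisms. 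Transporting the structure maps of $\bool$ along these isomorphisms yields morphisms $p_{\ast}\bool\times p_{\ast}\bool\to p_{\ast}\bool$, $p_{\ast}\bool\to p_{\ast}\bool$, and $1\to p_{\ast}\bool$ in $\E$, the candidate Boolean operations on $p_{\ast}\bool$.

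It then remains to verify the axioms: applying the functor $p_{\ast}$ to each commuting diagram that witnesses a Boolean-algebra law for $\bool$, and using the compatibility of $p_{\ast}$ with finite products recorded above, produces exactly the corresponding commuting diagram for $p_{\ast}\bool$. Since no axiom of Boolean algebras involves an infinite operation, an existential quantifier, or anything beyond finite products, the left-exactness of $p_{\ast}$ is precisely what the argument requires. Equivalently, and more conceptually: an internal Boolean algebra in $\E$ is the same data as a finite-product-preserving functor from the (finite-product) theory of Boolean algebras to $\E$; composing the functor classifying $\bool\in\Set$ with the finite-product-preserving functor $p_{\ast}\colon\Set\to\E$ gives a functor classifying $p_{\ast}\bool\in\E$.

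The proof is essentially bookkeeping and I do not expect a genuine obstacle. The only point deserving care — and worth stating explicitly — is the observation that every Boolean-algebra axiom is expressible using finite products alone, so that mere left-exactness of $p_{\ast}$ (rather than any exactness of $p^{\ast}$, or preservation of exponentials or of the subobject classifier) is exactly the hypothesis needed; this is also what makes the lemma genuinely ``canonical,'' i.e. functorial in the pointed topos $(\E,p)$.
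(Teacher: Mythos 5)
Your argument is correct and is exactly the paper's proof, spelled out in detail: the paper's one-line justification is that $p_{\ast}$, being the right adjoint of a geometric morphism, preserves finite products and hence preserves all internal (finitary equational) algebras. Your careful unpacking of why finite-product preservation suffices for the Boolean-algebra axioms is precisely the content of that remark.
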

\begin{proof}
    Since the right adjoint functor $p_{\ast}$ preserves all finite products, it preserves all internal algebras.
\end{proof}

We call this internal boolean algebra $p_{\ast}\{\top, \bot\}$ \demph{the canonical Boolean algebra} of a pointed topos $p\colon \Set \to \E$.
To apply this to our topos $\Aset$, we introduce the notion of \demph{the canonical point}. This terminology is due to \cite{rogers2019toposes}.
\begin{definition}\label{def:CanonicalPoint}
\demph{The canonical point} $p\colon \Set \to \Aset$ of the topos $\Aset$ is the geometric morphism 
    \[p\colon \Set \simeq\PSh(1) \to  \PSh(\MA)\simeq \Aset,\]
    induced by the unique monoid homomorphism $1 \to \MA$, where the inverse image functor $p^{\ast}$ is the forgetful functor.
\end{definition}

The direct image part $p_{\ast}$ is calculated by the formula of pointwise right Kan extension, and the result is as follows:
\begin{lemma}
    The direct image part $p_{\ast}\colon \Set \to \Aset$ sends a set $X$ to the set of functions $p_{\ast}(X) = X^{\MA}$ equipped with the $\MA$-action $(\phi w)(v) = \phi(wv)$. 
\end{lemma}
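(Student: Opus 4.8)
The plan is to compute the direct image functor $p_\ast$ directly from the adjunction $p^\ast \dashv p_\ast$, where $p^\ast \colon \Aset \to \Set$ is the forgetful functor (evaluation at the unique object of $\MA$, viewed as a one-object category). Since $p^\ast$ is the inverse image of a geometric morphism induced by the monoid homomorphism $1 \to \MA$, its right adjoint $p_\ast$ is the right Kan extension along this homomorphism; concretely, for a set $X$ we must exhibit a $\MA$-action on some set $p_\ast(X)$ together with a natural bijection $\Aset\bigl((Q,\delta), p_\ast(X)\bigr) \cong \Set\bigl(Q, X\bigr)$.

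First I would guess the answer and verify the universal property rather than grinding through the Kan extension formula, though either works. Set $p_\ast(X) = X^{\MA}$ with the $\MA$-action $(\phi w)(v) = \phi(wv)$; one checks this is a right action since $((\phi w)w')(v) = (\phi w)(w'v) = \phi(ww'v) = (\phi(ww'))(v)$. The counit $\varepsilon_X \colon p^\ast p_\ast(X) = X^{\MA} \to X$ is evaluation at the empty word, $\phi \mapsto \phi(\empword)$. Then given any $\A$-set $(Q,\delta)$ and any function $f \colon Q \to X$, the transpose $\tilde f \colon Q \to X^{\MA}$ is forced to be $\tilde f(q)(w) = f(qw)$ (so that $\varepsilon_X \circ p^\ast \tilde f = f$), and I would check: (i) $\tilde f$ is a morphism of $\A$-sets, i.e. $\tilde f(qa) = \tilde f(q)a$, which unwinds to $f(qaw) = f(q(aw))$, true by associativity of the word action; (ii) uniqueness of $\tilde f$ given the counit condition; (iii) naturality in $X$. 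This establishes $p_\ast(X) = X^{\MA}$ with the claimed action.

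Alternatively, and perhaps more in the spirit of the sentence preceding the lemma, I would invoke the pointwise right Kan extension formula $\bigl(\mathrm{Ran}_{!}X\bigr) = \lim_{(!\downarrow \ast)} X$. The comma category here has objects the elements of $\MA$ with the evident morphisms, so the limit is just $X^{\MA}$ as a set; the residual $\MA$-action on this limit, coming from the functoriality of the comma category construction in the codomain variable, works out to precomposition $(\phi w)(v) = \phi(wv)$. The only mild subtlety — and the step most worth stating carefully rather than the routine associativity checks — is getting the variance of this induced action right: one must track that acting by $w$ on $X^{\MA}$ corresponds to the reindexing $v \mapsto wv$ on the comma category, which is where the left multiplication $wv$ (as opposed to $vw$) appears. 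Everything else is a direct unwinding of definitions.
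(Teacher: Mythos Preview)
Your proposal is correct. The paper does not actually give a proof of this lemma: it simply remarks, in the sentence preceding the statement, that $p_\ast$ ``is calculated by the formula of pointwise right Kan extension, and the result is as follows,'' and then states the lemma without further argument. Your second approach is exactly what the paper is gesturing at; your first approach (directly verifying the adjunction $p^\ast \dashv p_\ast$ via the transpose $\tilde f(q)(w) = f(qw)$) is a self-contained alternative that avoids invoking the Kan extension machinery and, incidentally, is precisely the computation the paper later unpacks in Remark~\ref{rmk:ThePointExhibitsMooremMachine} when interpreting $p^\ast \dashv p_\ast$ as the behaviour of Moore machines. Either route is fine here, and your care about the variance of the induced action (left multiplication $wv$ rather than $vw$) is the one point genuinely worth flagging.
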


The following proposition provides a categorical description of the boolean algebra of languages and its universality, which serves as a foundation of this paper. For example, as we will see in the next subsection, this universality implies the famous coinductive description of language recognition.
\begin{proposition}[The canonical Boolean algebra consists of all languages]
\label{prop:LanguageandPoints}
The $\A$-set of languages $\Lan$ is isomorphic to the canonical Boolean algebra $p_{\ast}(\bool)$ (\cref{lem:BooleanAlgebrainPointedTopos}), i.e., \[\Lan \cong p_{\ast}(\bool)\text{   in }\Aset.\]
\end{proposition}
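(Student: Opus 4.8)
The plan is to unwind the formula for the direct image $p_*$ from the preceding lemma and match it, as a $\Sigma$-set, with the $\Sigma$-set $\Lan$ of languages equipped with the left-quotient action. Applying that lemma with $X = \bool$, we get $p_*(\bool) = \bool^{\MA}$, the set of all functions $\MA \to \bool$, with the $\MA$-action $(\phi w)(v) = \phi(wv)$. First I would set up the evident bijection $\bool^{\MA} \xrightarrow{\sim} \Pow(\MA) = \Lan$ sending a function $\phi$ to the language $L_\phi \coloneqq \{w \in \MA \mid \phi(w) = \top\}$ (equivalently, identifying a subset with its characteristic function); this is a bijection of sets with no structure yet involved.

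Next I would check that this bijection is $\Sigma$-equivariant, i.e.\ that it is a morphism in $\Aset$. Unwinding both sides: on the $\bool^{\MA}$-side, for a single letter $a \in \A$ the action gives $\phi a$ with $(\phi a)(v) = \phi(av)$, so the associated language is $L_{\phi a} = \{v \mid \phi(av) = \top\} = \{v \mid av \in L_\phi\}$, which is exactly $a^{-1} L_\phi = \delta(L_\phi, a)$ by the definition of the left quotient. Hence the bijection intertwines the two transition functions, so it is an isomorphism of $\Sigma$-sets. (One may also note in passing that it is automatically an isomorphism of internal Boolean algebras, since the Boolean structure on $p_*(\bool)$ is defined pointwise from that of $\bool$, matching the pointwise/settheoretic Boolean operations on $\Pow(\MA)$ — but this is not needed for the stated claim, which is only about $\Aset$.)

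There is essentially no hard step here; the only thing to be slightly careful about is a left/right convention mismatch. The action on $p_*(X) = X^{\MA}$ is a \emph{right} $\MA$-action written $(\phi w)(v) = \phi(wv)$, whereas the left quotient $a^{-1}L = \{v \mid av \in L\}$ is phrased in terms of prepending $a$ on the left of $v$; the computation above shows these align precisely because $(\phi a)(v)$ evaluates $\phi$ at $av$, i.e.\ at $a$ \emph{concatenated with} $v$, not $v$ concatenated with $a$. So the main (very mild) obstacle is just verifying that this convention bookkeeping works out, which it does: the right action of the letter $a$ on $\phi$ corresponds exactly to the left quotient by $a$ on $L_\phi$. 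I would therefore present the proof as: (i) invoke the lemma to identify $p_*(\bool)$ with $\bool^{\MA}$; (ii) exhibit the characteristic-function bijection to $\Lan$; (iii) verify equivariance on single letters (which extends to all words by functoriality / induction on word length, though single letters suffice since they generate the transition structure).
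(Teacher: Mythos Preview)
Your proposal is correct and follows exactly the same approach as the paper: the paper's proof is the single line ``The isomorphism $\Lan \cong \bool^{\MA} \cong p_{\ast}(\bool)$ follows from the above lemma,'' and you have simply spelled out the equivariance check that this sentence leaves implicit. Your careful verification that the right action $(\phi a)(v) = \phi(av)$ matches the left quotient $a^{-1}L_\phi$ is the one detail the paper omits, so your write-up is strictly more informative while being the same argument.
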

\begin{proof}
    The isomorphism $\Lan \cong {\bool}^{\MA} \cong p_{\ast}(\bool)$ follows from the above lemma. 
\end{proof}

\begin{remark}[The point $p^{\ast}\dashv p_{\ast}$ describes computation by Moore machine]
\label{rmk:ThePointExhibitsMooremMachine}
    This adjunction $p^{\ast}\dashv p_{\ast}$ exhibits the behavior of \demph{Moore machines}.
    More precisely, for a set $O$ (of \demph{outputs}) and a $\A$-set $(Q,\delta)$, the adjunction provides the one-to-one correspondence between 
    \begin{description}
        \item[Output assignment to each state] a function $g^{\sharp}\colon p^* (Q, \delta) = Q\to O$, and
        \item[(Curried) run] a $\A$-set morphism $g^{\flat} \colon (Q, \delta)\to O^{\MA} = p_* O$,
    \end{description}
    where $g^{\flat}$ sends $q \in Q$ to 
    \[
     {g^{\flat}} (q) \colon \MA \to O \colon w \mapsto g^{\sharp}(qw).
    \]
    This is exactly same as the computation by a Moore machine, since $g^{\sharp}(qw)$ is the output of the run, with the initial state $q \in Q$ and the input word $w\in \MA$. The language recognition is the special case where $O=\bool$ (\cref{cor:CategoricalDescriptionOfLanguageRecognition}).
\end{remark}

\subsection{\texorpdfstring{$\Atmt$}{Atmt}: The presheaf topos of (coalgebraic) automata}\label{ssec:ToposOfAutomata}
This subsection aims to define a (presheaf) topos of automata, which rewrite the coalgebraic treatment of automata.
We will observe that the category of (coalgebraicly defined) automata is a presheaf topos (\cref{cor:FourDescriptionOfAtmt}), and that the language recognition ($=$ coinduction) is the structure map of a slice topos (\cref{cor:CategoricalDescriptionOfLanguageRecognition}).

\subsubsection{(Coalgebraic) automata form a topos}
We define the category of automata as follows:
\begin{definition}We adopt the following definitions and terminologies:
\begin{itemize}
    \item An \demph{automaton}\footnote{Our definition of automata does not contain the notion of \demph{start states}. However, start states will naturally appear in our formulation, for example, in \cref{cor:CategoricalDescriptionOfLanguageRecognition}, in \cref{rmk:YonedaAndMinimalization}, and also in the follow-up paper.} is a $\A$-set $(Q, \delta\colon Q\times \A \to Q)$ equipped with a subset $F\subset Q$. 
    \item An element of $F$ is called an $\demph{accept state}$. 
    \item A morphism of automata $f\colon (Q_1, \delta_1, F_1)\to (Q_2, \delta_2, F_2)$ is a $\A$-set morphism $f\colon (Q_1, \delta_1) \to (Q_2,\delta_2)$ that preserves and reflects accept states, (i.e., $F_1 = f^{-1}(F_2)$).
    \item The category of automata is denoted by $\Atmt$.
\end{itemize}
\end{definition}


\begin{remark}[Automata as coalgebras]
\label{rem:AutomataAsCoalgebras}
    In the category theory community, the above definition of automata has been considered in the context of \demph{colalgebras} 
    More precisely, the category of automata $\Atmt$ is equivalent to the category of coalgebras of an endofunctor $2x^{\A}\colon \Set \to \Set\colon X \mapsto \{\text{accept}, \text{reject}\}\times X^{\A}$. 
    \[
    \Atmt \simeq \Coalg_{2x^{\A}}
    \]
    (For more details, see textbooks including \cite{jacobs2017introduction, rutten2019method})
\end{remark}

\begin{theorem}
\label{thm:AutomataAsSliceTopos}
    The category of automata $\Atmt$ is equivalent to the slice category $\Aset / \Lan$. 
    \[
    \Atmt \simeq \Aset / \Lan
    \]
\end{theorem}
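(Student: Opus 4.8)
The plan is to exhibit an equivalence $\Atmt \simeq \Aset/\Lan$ by matching the data of an automaton with the data of a morphism of $\A$-sets into $\Lan$. Given an object $f\colon (Q,\delta) \to \Lan$ of the slice category, I would recover an automaton structure on $(Q,\delta)$ by declaring a state $q \in Q$ to be accepting precisely when the empty word $\empword$ lies in the language $f(q) \in \Lan = \Pow(\MA)$; that is, $F \coloneqq \{q \in Q \mid \empword \in f(q)\}$. Conversely, given an automaton $(Q,\delta,F)$, I would send a state $q$ to the language $L_q \coloneqq \{w \in \MA \mid qw \in F\}$ of words driving $q$ into an accept state, and check that $q \mapsto L_q$ is a morphism of $\A$-sets $(Q,\delta) \to \Lan$, i.e. that it intertwines the transition action with the left-quotient action $a^{-1}(-)$ on $\Lan$. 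This is the standard coinductive/behaviour map; compatibility is the computation $a^{-1}L_q = \{w \mid aw \in L_q\} = \{w \mid q(aw) \in F\} = \{w \mid (qa)w \in F\} = L_{qa} = L_{\delta(q,a)}$.

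Next I would verify these two constructions are mutually inverse and functorial. For the round trip starting from an automaton: the recovered accept set is $\{q \mid \empword \in L_q\} = \{q \mid q\empword = q \in F\} = F$, as desired. For the round trip starting from $f\colon (Q,\delta)\to\Lan$: I must check $L_q = f(q)$ for all $q$, which follows because $w \in L_q \iff qw \in F \iff \empword \in f(qw) = w^{-1}f(q) \iff w \in f(q)$, using that $f$ is an $\A$-set morphism (extended from letters to words) so that $f(qw) = w^{-1} f(q)$. On morphisms, a morphism of automata $g\colon (Q_1,\delta_1,F_1)\to(Q_2,\delta_2,F_2)$ satisfies $F_1 = g^{-1}(F_2)$, which is exactly the condition that the behaviour maps form a commuting triangle over $\Lan$; conversely a triangle over $\Lan$ pulls back accept states correctly. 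All of this is routine once the correspondence is set up.

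Alternatively — and perhaps more cleanly — I could observe that $\Aset/\Lan \simeq \PSh(\MA)/\Lan$ and use \cref{prop:LanguageandPoints} identifying $\Lan \cong p_\ast(\bool)$, together with the adjunction $p^\ast \dashv p_\ast$: a morphism $(Q,\delta) \to p_\ast(\bool)$ corresponds by \cref{rmk:ThePointExhibitsMooremMachine} to a function $Q = p^\ast(Q,\delta) \to \bool$, i.e. to a subset $F \subseteq Q$, and this transposition is natural, giving the equivalence of slice categories with the comma-type category of $\A$-sets-with-a-subset directly. Either route works; I would present the explicit one since it makes the dictionary with \cref{rem:AutomataAsCoalgebras} transparent. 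I do not anticipate a genuine obstacle here — the only thing requiring care is being consistent about whether $F$ is recovered via "$\empword$ in the language" and checking the $\A$-set morphism condition really is equivalent to "preserves and reflects accept states"; the reflection (not just preservation) of accept states by automata morphisms is what makes the triangle commute on the nose rather than merely laxly, so that is the point to state carefully.
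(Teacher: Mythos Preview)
Your proposal is correct. The paper's proof is exactly your ``alternative'' route: it invokes \cref{prop:LanguageandPoints} (so $\Lan \cong p_\ast\bool$) and the adjunction $p^\ast \dashv p_\ast$ to conclude in one line that a $\A$-set morphism $(Q,\delta)\to\Lan$ corresponds to a function $\chi_F\colon Q\to\bool$, i.e.\ a subset $F\subset Q$. Your primary explicit construction is also fine, but it is precisely the unfolding of that adjunction by hand (the counit evaluated at $\empword$ recovers $F$, and the unit sends $q$ to $L_q$); the paper simply opts for the compressed version.
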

\begin{proof}
   Due to \cref{prop:LanguageandPoints}, a $\A$-set morphism $(Q, \delta)\to  \Lan$ corresponds to a function $\chi_{F}\colon  Q \to \bool$, which specifies the set of accept states $F\subset Q$.
\end{proof}
Geometrically speaking, the topos $\Atmt$ is an \'{e}tale covering over the topos $\Aset$.

\begin{corollary}
\label{cor:FourDescriptionOfAtmt}
The following four categories are mutually equivalent, and hence they are presheaf topoi, (in particular, Grothendieck topoi).
    \begin{itemize}
        \item $\Atmt$
        \item $\Coalg_{2x^{\A}}$ (\cref{rem:AutomataAsCoalgebras})
        \item $\Aset/ \Lan$
        \item $\PSh(\CoL)$, where \demph{the category of languages} $\CoL$ is defined to be the category of elements of $\Lan \in \ob(\PSh(\MA))$.
    \end{itemize}
\end{corollary}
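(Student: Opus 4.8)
The plan is to notice that most of the work has already been done: three of the four comparisons are immediate from results stated above. Indeed $\Atmt \simeq \Coalg_{2x^{\A}}$ is \cref{rem:AutomataAsCoalgebras}, and $\Atmt \simeq \Aset/\Lan$ is \cref{thm:AutomataAsSliceTopos}. Combining the latter with the equivalence $\Aset \simeq \PSh(\MA)$ of the first proposition and transporting $\Lan$ across it, the entire corollary reduces to the single new equivalence
\[
\PSh(\MA)/\Lan \;\simeq\; \PSh(\CoL),
\]
where $\CoL$ is the category of elements of the presheaf $\Lan$.

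For this I would invoke the standard "fundamental theorem of presheaf categories": for a small category $\C$ and a presheaf $P \in \PSh(\C)$, the slice $\PSh(\C)/P$ is equivalent to the presheaf category on the category of elements $\CoE P$ (see, e.g., Mac Lane--Moerdijk, or the nLab entry on the category of elements). Concretely, the equivalence sends a morphism $\alpha\colon R \to P$ to the presheaf on $\CoE P$ whose value at $(c, x\in P(c))$ is the fibre $\alpha_c^{-1}(x) \subseteq R(c)$, with the evident restriction maps; in the other direction a presheaf $G$ on $\CoE P$ is sent to $c \mapsto \coprod_{x\in P(c)} G(c,x)$ equipped with its canonical projection to $P$. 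One checks these are pseudo-inverse. Specializing to $\C = \MA$ (a one-object category) and $P = \Lan$ gives exactly $\PSh(\MA)/\Lan \simeq \PSh(\CoL)$, and chaining all the equivalences establishes the four-way comparison; since each category is thereby a presheaf topos, it is in particular a Grothendieck topos.

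The only point demanding care — which I would flag as the main, though entirely bookkeeping-level, obstacle — is the handling of variance. The identification $\Aset \simeq \PSh(\MA)$ realizes $\A$-sets as \emph{right} $\MA$-actions, i.e.\ as presheaves on $\MA$ viewed as a one-object category, so one must use the form of the category of elements appropriate to contravariant functors and then verify that the resulting $\CoL$ agrees with the paper's intended "category of languages" (objects the languages, morphisms induced by left quotients $a^{-1}(-)$). Once the conventions are fixed this is immediate, so I do not expect any genuine difficulty: the substance of the corollary is just the dictionary "a slice of a presheaf topos is again a presheaf topos, on the category of elements."
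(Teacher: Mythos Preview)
Your proposal is correct and follows essentially the same route as the paper: the paper likewise notes that the first three equivalences have already been established and then invokes the general fact $\PSh(\C)/P \simeq \PSh(\CoE P)$ for the last one. Your additional remarks on the explicit form of the equivalence and on variance are extra detail the paper omits, but the underlying argument is identical.
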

\begin{proof}
   We have observed the equivalence between the first three categories. For the last one $\PSh(\CoL)$, this follows from the general fact of a slice of a presheaf topos $\PSh(\C)/P \simeq \PSh(\CoE P)$.
\end{proof}

\subsubsection{Language recognition}
By abuse of notation, we will refer to the automaton of languages defined below and the $\A$-set of languages, by the same symbol $\Lan$.
\begin{definition}
\demph{The automaton of languages}, which is also denoted by $\Lan$, is the $\A$-set of languages $\Lan$ equipped with the set of accept states $F \subset \Lan$ defined by 
    \[
    F\coloneqq \{L\in \Lan \mid \empword \in L\},
    \]
    where $\empword$ denotes the empty word.
\end{definition}

\Cref{thm:AutomataAsSliceTopos} immediately implies (and provides a new perspective on) the following famous theorem in the coalgebraic theory of automata.
\begin{corollary}[Categorical description of language recognition]
\label{cor:CategoricalDescriptionOfLanguageRecognition}
    The automaton of languages $\Lan$ is the terminal object of $\Atmt$. Furthermore, for an automaton $A=(Q,\delta, F)$, the unique morphism $A\to \Lan$ in $\Atmt$ sends a state $q\in Q$ to the language $(\in \Lan)$ that the automaton $(Q, \delta, F)$ equipped with the starting state $q$ recognizes.
\end{corollary}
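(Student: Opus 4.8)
The plan is to unwind the equivalence $\Atmt \simeq \Aset/\Lan$ from \cref{thm:AutomataAsSliceTopos} and read off both claims from the corresponding statements about the slice category. First I would recall that in any slice category $\mathcal{E}/X$, the terminal object is the identity $\id_X \colon X \to X$; transporting this along the equivalence of \cref{thm:AutomataAsSliceTopos} shows that the object of $\Atmt$ corresponding to $\id_\Lan$ is terminal. So the first key step is to check that the automaton structure on $\Lan$ that corresponds to $\id_\Lan \colon \Lan \to \Lan$ under the correspondence in the proof of \cref{thm:AutomataAsSliceTopos} is exactly the automaton of languages as just defined, i.e.\ that its set of accept states is $\{L \in \Lan \mid \empword \in L\}$. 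By \cref{prop:LanguageandPoints} the map $\id_\Lan$ corresponds to the characteristic function $\chi_F \colon \Lan \to \bool$ obtained as the composite $\Lan \xrightarrow{\id} \Lan \cong p_\ast(\bool)$ followed by evaluation-at-$\empword$; unravelling the isomorphism $\Lan \cong \bool^{\MA}$ of the earlier lemma, $\chi_F(L)$ is the value of $L$ (viewed as a function $\MA \to \bool$) at $\empword$, which is $\top$ precisely when $\empword \in L$. That identifies $F$ as claimed, so $\Lan$ is terminal in $\Atmt$.

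For the second assertion I would compute the unique morphism $A \to \Lan$ explicitly. Under $\Atmt \simeq \Aset/\Lan$, an automaton $A = (Q,\delta,F)$ corresponds to the $\A$-set map $\chi_F^\flat \colon (Q,\delta) \to \Lan$ which is itself the transpose (via the adjunction $p^\ast \dashv p_\ast$ of \cref{rmk:ThePointExhibitsMooremMachine}, with $O = \bool$) of $\chi_F \colon Q \to \bool$, the characteristic function of $F$. The unique morphism to the terminal object is then this very map $\chi_F^\flat$. By the formula in \cref{rmk:ThePointExhibitsMooremMachine}, $\chi_F^\flat(q) \colon \MA \to \bool$ sends $w \mapsto \chi_F(qw)$, i.e.\ $\chi_F^\flat(q) = \{\, w \in \MA \mid qw \in F \,\}$. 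Reading off the definition of the transition function on $\Lan$ (left quotient), this is exactly the language recognized by the DFA-with-initial-state $(Q,\delta,F,q)$: a word $w$ is accepted iff, starting at $q$ and reading $w$, one lands in an accept state. So the final step is just to observe that $\{w \mid qw \in F\}$ is the standard definition of the language accepted from $q$, which completes the proof.

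I do not expect a serious obstacle here; the statement is essentially a dictionary translation, and the only place demanding care is the bookkeeping of the two adjunction/isomorphism identifications — namely that evaluation-at-$\empword$ is the component of $\Lan \cong p_\ast(\bool)$ used in \cref{thm:AutomataAsSliceTopos}, and that the transpose in \cref{rmk:ThePointExhibitsMooremMachine} is compatible with that identification. Once those are pinned down, both the terminality and the description of the canonical map $q \mapsto \{w \mid qw \in F\}$ drop out. The mild subtlety worth flagging explicitly is the compatibility of ``accept states are reflected'' (the condition $F_1 = f^{-1}(F_2)$ in the definition of automaton morphisms) with the pullback description of morphisms in the slice $\Aset/\Lan$, but this is precisely the content already used in proving \cref{thm:AutomataAsSliceTopos}, so it can be cited rather than reproved.
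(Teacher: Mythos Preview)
Your proposal is correct and follows essentially the same route as the paper: the paper also derives terminality from the slice description of \cref{thm:AutomataAsSliceTopos} (the identity $\id_{\Lan}$ is terminal in $\Aset/\Lan$) and reads off the explicit form of the unique map via the adjunction $p^{\ast}\dashv p_{\ast}$ from \cref{rmk:ThePointExhibitsMooremMachine}. Your write-up is more explicit than the paper's in verifying that $\id_{\Lan}$ corresponds to the accept-state set $\{L\mid \empword\in L\}$, but this is just unpacking the same identifications rather than a different argument.
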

\begin{proof}
    This is usually proven by the theory of final coalgebras. (See, for example, \cite{jacobs2017introduction}.) But we will derive it from \cref{thm:AutomataAsSliceTopos}. Since the terminal object of a slice category $\C/c$ is the identity map $\id_c\colon c\to c$ in general, the terminal object of $\Atmt \simeq \Aset /\Lan $ is the identity map $\id_{\Lan}\colon \Lan \to \Lan$, which corresponds to the automaton of languages. The latter statement follows from 
    the calculation of the adjunction $p^{\ast}\dashv p_{\ast}$ (\cref{rmk:ThePointExhibitsMooremMachine}).
\end{proof}


\begin{remark}[Yoneda point of view on initial state, recognition, and the minimal automaton]
\label{rmk:YonedaAndMinimalization}
Our topos-theoretic framework also describes the automata minimalization, which resembles the functorial approach \cite{colcombet2020automata}.
    Let $\yo(\ast)$ denote the free $\A$-set, whose underlying set is $\MA$, and the action is given by the concatenation of words. It is the unique representable presheaf in $\Aset$. 
    By the Yoneda lemma, a diagram 
    \[
    \begin{tikzcd}
        \yo(\ast)\ar[r, "\lceil {q_0} \rceil"]&(Q, \delta)\ar[r, "\chi_{F}"]&\Lan
    \end{tikzcd}
    \]
    in $\Aset$ corresponds to the data of an automaton $(Q, \delta, F)$ equipped with an initial state $q_0 \in Q$. Their composite
    \[
    \begin{tikzcd}
        \yo(\ast)\ar[r]&\Lan
    \end{tikzcd}
    \]
    corresponds to the recognized language $L\in \Lan$ by the Yoneda lemma.

    Conversely, for any language $L\in \Lan$, there is the corresponding morphism 
    \[
    \begin{tikzcd}
        \yo(\ast)\ar[r, "\lceil L\rceil"]&\Lan
    \end{tikzcd}
    \]
    by the Yoneda lemma. Since $\Aset$ is a topos, there is an epi-mono factorization, which provides the canonically constructed new automaton (equipped with an initial state):
    \[
    \begin{tikzcd}
        \yo(\ast)\ar[rr, "\lceil L\rceil"]\ar[rd,twoheadrightarrow]&&\Lan\\
        &\mathcal{A}(L)\ar[ru, rightarrowtail]&
    \end{tikzcd}
    \]
    and this new automaton $\mathcal{A}(L)$ coincides with what's called the minimal automaton (or Nerode-automaton, see details for \cite[][4.6 Minimal automata]{pin2020mathematical}) of the language $L$.
\end{remark}

\subsection{\texorpdfstring{$\ofAset$}{ofAset}: The Grothendieck topos of orbit-finite word actions}\label{ssec:ToposOfofAset}
So far, both of $\Aset, \Atmt$ are presheaf topoi, and we did not assume any finiteness assumption. However, needles to say,
finiteness is crucial for the theory of regular languages. For example, a regular language is defined as a language recognizable by a \textbf{finite} automaton. 
So, our next question is: how can we deal with finiteness in our framework? To answer this question, we will define a Grothendieck topos, which is not a presheaf topos.

This subsection focuses on the category of \demph{orbit-finite} $\A$-sets, which turns out to be a Grothendieck topos (\cref{prop:OrbitFiniteFormsATopos}). The content of the present subsection will be generalized to a broader context in the follow-up paper.

\subsubsection{Orbit-finite word actions form a topos}

We introduce the notion of an \demph{orbit-finite} $\A$-set and show that it forms a (pointed) Grothendieck topos. Then, we demonstrate that this topos characterizes a class of regular languages in complete parallel with \cref{ssec:toposofWordActions}.

\begin{definition}
\label{def:ofAset}
We define the notion of local finiteness as follows:
\begin{itemize}
    \item For a $\A$-set $(Q,\delta)$ and its state $q \in Q$, the \demph{orbit} of $q$ is the set $\{qw\mid w \in \MA\} \subset Q$.
    \item  A $\A$-set $(Q, \delta)$ is \demph{orbit-finite}, if, for any $q\in Q$, its orbit is a finite set.
    \item The category of orbit-finite $\A$-sets, which is a full subcategory of $\Aset$, is denoted by $\ofAset$.
\end{itemize}
\end{definition}

\begin{remark}[Why orbit-finite, not finite?]
Regular languages are defined as languages that are recognized by finite automata. Then why do we consider orbit-finite automata instead of finite automata? There are many reasons, but they can be broadly divided into two:
\begin{enumerate}
    \item It is equivalent to say that regular languages are those accepted by orbit-finite automata.
    \item Orbit-finite automata are closed under more constructions than finite automata.
\end{enumerate}

The first reason indicates that we do not necessarily need to stick to finite automata. The latter ensures good categorical properties, such as existense of small colimits, specifically appearing as:
\begin{itemize}
    \item $\ofAset$ becomes a Grothendieck topos (\cref{prop:OrbitFiniteFormsATopos}),
    \item the collection of all regular languages forms an orbit-finite (but not finite) automaton (\cref{prop:ReglanIsTheCanonicalBoolean}).
\end{itemize}
The category of finite $\A$-sets will be utilized as a site for the topos (\cref{prop:SiteDescriptionWithDFA}). See also \cref{rmk:semi-GaloisCategories} and \cite{uramoto2017semi}
\end{remark}

The goal of this sub-subsection is to prove the following proposition. For the notion of \demph{hyperconnected geometric morphisms}, see \cref{appendix:Hyperconnected}, \cite{johnstone2002sketchesv1}, or \cite{johnstone1981factorization}.

\begin{proposition}
\label{prop:OrbitFiniteFormsATopos}
 The category $\ofAset$ is a Grothendieck topos, and there is a hyperconnected geometric morphism \[h\colon \Aset \to \ofAset\] whose inverse image part $h^{\ast}\colon \ofAset\to \Aset$ is the canonical embedding functor.
\end{proposition}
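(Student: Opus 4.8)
The plan is to exhibit $\ofAset$ as the category of sheaves on a suitable Grothendieck topology on $\MA$ (equivalently, on the one-object category $\MA$ regarded as a site), so that $\Aset = \PSh(\MA)$ maps onto it via the associated-sheaf functor. Concretely, first I would characterize the orbit-finite condition more usefully: a right $\MA$-action $Q$ is orbit-finite iff every cyclic sub-action $q\cdot\MA$ is finite, equivalently iff $Q$ is a filtered colimit (in fact a directed union) of finite sub-actions. This closure under directed unions, together with closure under subobjects and finite products (which are all easy to check directly), is the key structural fact. One then recognizes $\ofAset$ as a reflective subcategory of $\Aset$: the reflection $L\colon \Aset \to \ofAset$ sends $Q$ to the union of all its orbit-finite (equivalently, all its finite) sub-actions, i.e. $L(Q) = \{q \in Q \mid q\cdot\MA \text{ is finite}\}$, which is a sub-$\MA$-action because the orbit of $q\cdot w$ is contained in the orbit of $q$. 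The inclusion $h^{\ast} = $ (canonical embedding) and this reflection $h_{\ast} = L$ will be shown to form the required geometric morphism once we know the subcategory is a topos.

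Next I would identify the topology. The natural candidate is the topology $J$ on $\MA$ whose covering sieves are exactly the sieves $S \subseteq \MA$ (i.e. right ideals of $\MA$, regarded as subfunctors of the representable $\yo(\ast)$) such that $\yo(\ast)/S$ — or rather the quotient action $\MA$ acted on with $S$ collapsed — no: more precisely, the sieves $S$ such that the complement of $S$ in every orbit is "eventually captured", which amounts to: $S$ is a $J$-cover iff $\MA \setminus S$ generates only finitely many distinct left-quotients, i.e. the $\MA$-set $\yo(\ast)/{\sim_S}$ built from $S$ is orbit-finite. I expect the clean statement to be: $J$-covering sieves of $\MA$ are those right ideals $S$ for which $\MA/S$ (the action on $\MA\sqcup\{\ast\}$ collapsing $S$ to a point, or just the set of "cofinite-in-$S$ behaviors") is finite. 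One checks $J$ is a Grothendieck topology (stability and transitivity) and that the $J$-sheaves are precisely those presheaves that are orbit-finite, by verifying the sheaf condition against these covers forces every element's orbit to be finite, and conversely orbit-finiteness gives uniqueness and existence of amalgamations. Then $\Sh(\MA, J) \simeq \ofAset$ is a Grothendieck topos by Giraud, and the canonical geometric morphism $\Sh(\MA,J)\to\PSh(\MA)$ has inverse image the inclusion.

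Alternatively — and this may be the slicker route I would actually write up — I would appeal directly to a recognition theorem for reflective subcategories: a full reflective subcategory $\mathcal{D} \hookrightarrow \mathcal{E}$ of a Grothendieck topos is itself a Grothendieck topos (equivalently, the reflection is the inverse image of a geometric morphism, i.e. the subcategory is a "topology" in the sense of localizations) precisely when the reflector $L$ preserves finite limits; and it is moreover \emph{hyperconnected} — meaning $h^{\ast}$ is full and faithful (automatic here, it's an inclusion) and preserves the subobject classifier — iff $L$ preserves monomorphisms and $h^{\ast}$ detects... the precise criterion in \cref{appendix:Hyperconnected}/\cite{johnstone1981factorization} is that $h_{\ast}h^{\ast} \cong \id$ (automatic), $h^{\ast}$ is fully faithful, and $h^{\ast}$ preserves $\Omega$ and the generic subobject. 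So the steps are: (1) show $\ofAset$ is closed in $\Aset$ under subobjects, finite products, and filtered colimits, hence reflective with reflector $L(Q) = \bigcup\{$finite sub-actions$\}$; (2) show $L$ preserves finite limits — this is the main obstacle, since pullbacks do not obviously commute with the "orbit-finite part" operation, though it should follow because a pullback $P = X\times_Z Y$ computed in $\Aset$ has the orbit of $(x,y)$ sitting inside orbit$(x)\times$orbit$(y)$, so if $x,y$ lie in the orbit-finite parts then so does $(x,y)$, giving $L(X\times_Z Y) = L(X)\times_{L(Z)} L(Y)$ — wait, one must be careful that $L(Z)$ in the denominator behaves, but since $X,Y$ orbit-finite $\Rightarrow$ $Z$-component of their images is automatically orbit-finite, this goes through; terminal and equalizers are trivial; (3) conclude $\ofAset = \Sh_j(\Aset)$ for the induced Lawvere–Tierney topology $j$, hence a Grothendieck topos, with geometric morphism $h\colon \Aset\to\ofAset$, $h^{\ast}=$ inclusion, $h_{\ast}=L$; (4) verify hyperconnectedness by checking $h^{\ast}$ is fully faithful (clear) and that $h^{\ast}$ preserves the subobject classifier, i.e. $\Omega_{\ofAset} \hookrightarrow \Omega_{\Aset}$ is an iso onto — equivalently that a subobject of an orbit-finite action is orbit-finite (true, by closure under subobjects) so that the classifying object agrees; the remaining condition that $h^{\ast}$ detects isomorphisms / the unit and counit interact correctly is the localization being hyperconnected iff the closed subtopos it determines is... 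I would instead invoke the characterization that a localization is hyperconnected iff $L$ preserves $\Omega$, which reduces to the subobject closure already established.

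The main obstacle, then, is step (2): proving the reflector $L$ is left exact. I would handle it by the orbit-containment argument sketched above (orbit of a tuple embeds in the product of orbits), being careful that the colimit/reflection is computed correctly and that no element of $L(X\times_Z Y)$ is missed — but since every element of the pullback is a compatible pair, and its orbit projects into the orbits of its components, finiteness of those components' orbits bounds it. All other verifications (closure properties, the topology axioms if one takes the site route, fully-faithfulness of the inclusion) are routine.
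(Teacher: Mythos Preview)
Your proposal has a genuine directional error that makes both approaches fail. You treat $\ofAset$ as a \emph{reflective} subcategory of $\Aset$ and try to realize it as $\Sh_j(\Aset)$ for some Lawvere--Tierney topology (equivalently, as $J$-sheaves on $\MA$). But a sheaf subcategory is reflective with the inclusion as \emph{right} adjoint; in particular it is closed under all limits in the ambient topos. This fails for $\ofAset$: with $\A=\{a\}$ and $X_n=\Z/n\Z$ (action by $+1$), each $X_n$ is finite, yet the orbit of $(0,0,\dots)$ in $\prod_n X_n$ is infinite. So no Grothendieck topology on $\MA$ can have $\ofAset$ as its sheaves, and your step (3) ``conclude $\ofAset=\Sh_j(\Aset)$'' cannot hold. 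Relatedly, your sentence ``the canonical geometric morphism $\Sh(\MA,J)\to\PSh(\MA)$ has inverse image the inclusion'' is backwards: the inclusion of sheaves is the \emph{direct} image.

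What is actually going on is dual. The functor you call $L$, namely $Q\mapsto\{q\in Q\mid q\MA\text{ finite}\}$, is a \emph{right} adjoint to the inclusion: $\ofAset$ is \emph{coreflective}, the inclusion is the left adjoint $h^{\ast}$, and $L=h_{\ast}$. (So your step (2), ``show $L$ preserves finite limits,'' is vacuous --- a right adjoint always does. The orbit-containment argument you give there is really the proof that the \emph{inclusion} preserves finite products, which is the content actually needed.) The paper's proof runs through exactly this coreflective picture via \cref{lem:CriteriaOfHyperconnected}: one checks that $\ofAset\hookrightarrow\Aset$ is closed under small coproducts, finite products, and subquotients --- all of which you essentially verify --- and that lemma then delivers both that $\ofAset$ is a Grothendieck topos and that the resulting geometric morphism $\Aset\to\ofAset$ is hyperconnected. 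Your raw ingredients are correct; they just need to be plugged into the coreflective/hyperconnected framework rather than the reflective/localization one.
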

\begin{proof}
    Due to \cref{lem:CriteriaOfHyperconnected}, it is enough to prive that the full subcategory $\ofAset\to \Aset$ is closed under taking small coproducts, finite products, and subquoteints (subobjects, and quotient objects). This immediately follows from the concrete calculation (and the fact that finite limits, subobjects, and quotient objects are preserved by the forgetful functor $\Aset \to \Set$).
\end{proof}

We have proven the above proposition by abstract nonsense, but we also provides a concrete description of the right adjoint for the later referrences.
\begin{lemma}[Construction of $h_{\ast}$]
We can construct the right adjoint $h_* \colon \Aset \to \ofAset$ as follows:
\label{lem:rightadjointOrbitFinite}
\begin{itemize}
    \item For a $\A$-set $(Q, \delta)$ and its subset \[Q_{\text{fin}}\coloneqq \{q\in Q \mid \text{the orbit of }q \text{ is finite}\},\] $(Q_{\text{fin}}, \delta)$ is the maximum orbit-finite sub$\A$-set.
    \item The above construction $(Q,\delta)\mapsto (Q_{\text{fin}},\delta)$ defines the right adjoint to the full embedding functor $h^*\colon \ofAset \to \Aset$.
\end{itemize}
\end{lemma}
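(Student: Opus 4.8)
The plan is to establish the two bullet points in turn, the first being essentially a concrete verification and the second following formally from it together with standard adjunction-via-reflection reasoning.

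For the first bullet, I would first check that $Q_{\mathrm{fin}}$ is closed under the transition function, i.e. that $\delta$ restricts to a function $Q_{\mathrm{fin}}\times\A \to Q_{\mathrm{fin}}$. This is immediate from the observation that the orbit of $\delta(q,a)$ is a subset of the orbit of $q$ (since $\delta(q,a) = qa$ and hence $\delta(q,a)w = q(aw)$ for all $w$), so if $q$ has finite orbit, so does $\delta(q,a)$. Thus $(Q_{\mathrm{fin}},\delta)$ is a sub-$\A$-set, and it is orbit-finite essentially by definition: the orbit of a point $q\in Q_{\mathrm{fin}}$ computed inside $Q_{\mathrm{fin}}$ coincides with its orbit inside $Q$, which is finite. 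Maximality is then also immediate: any orbit-finite sub-$\A$-set $(Q',\delta)\hookrightarrow (Q,\delta)$ has the property that every $q\in Q'$ has finite orbit in $Q'$, hence finite orbit in $Q$ (the two orbits agree, as the inclusion is an $\A$-set map), so $Q'\subseteq Q_{\mathrm{fin}}$.

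For the second bullet, I would verify the universal property of the reflection directly: given an orbit-finite $\A$-set $(P,\gamma)\in\ofAset$ and an $\A$-set morphism $f\colon h^*(P,\gamma) = (P,\gamma) \to (Q,\delta)$, I claim $f$ factors uniquely through the inclusion $(Q_{\mathrm{fin}},\delta)\hookrightarrow(Q,\delta)$. Uniqueness is clear since the inclusion is monic. For existence, it suffices to check that $f(P)\subseteq Q_{\mathrm{fin}}$: for $p\in P$, the orbit of $f(p)$ in $Q$ is the image under $f$ of the orbit of $p$ in $P$ (again because $f$ commutes with the actions), and the latter is finite since $P$ is orbit-finite, so $f(p)$ has finite orbit, i.e. $f(p)\in Q_{\mathrm{fin}}$. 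This exhibits $(Q_{\mathrm{fin}},\delta)$ as the value at $(Q,\delta)$ of a right adjoint to the full embedding $h^*$; naturality of the counit (here the inclusion) in $(Q,\delta)$ is routine.

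I do not expect a serious obstacle here: the statement is a concrete unwinding of the definition of orbit-finiteness, and the only point requiring any care is the repeated use of the fact that an $\A$-set morphism sends orbits onto orbits and that orbits are computed the same way in a sub-$\A$-set as in the ambient one. If one wished to be slicker, one could instead simply note that $h^*$ is the inverse image of the hyperconnected geometric morphism $h$ already produced in \cref{prop:OrbitFiniteFormsATopos}, so a right adjoint $h_*$ exists abstractly, and then identify its value by checking that $(Q_{\mathrm{fin}},\delta)$ satisfies the same universal property — but giving the explicit formula as above is more useful for later reference, which is the stated purpose of the lemma.
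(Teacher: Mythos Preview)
Your proof is correct. The only difference from the paper is packaging: for the second bullet the paper simply invokes \cref{lem:CriteriaOfHyperconnected}, whose proof already identifies the right adjoint as ``take the maximum subobject lying in the subcategory,'' so once the first bullet is established the identification of $h_*$ is immediate. You instead verify the universal property of the coreflection by hand (and then mention the abstract route as an alternative); this is exactly what unwinding \cref{lem:CriteriaOfHyperconnected} in this special case amounts to, so the two arguments are the same in content, yours being more self-contained and the paper's being terser given the appendix is already in place.
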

\begin{proof}
    The former part is easy to prove. Notice that $Q_{\text{fin}}$ is closed by the $\MA$ actions. The latter part follows from the former one and \cref{lem:CriteriaOfHyperconnected}.
\end{proof}


We obtained the concrete description $h_{\ast}(Q, \delta)=(Q_{\text{fin}}, \delta) $ from \cref{lem:rightadjointOrbitFinite}. The monic counit (see \cref{def:hyperconnected}) is the inclusion $\epsilon_{(Q, \delta)} \colon (Q_{\text{fin}}, \delta) \rightarrowtail (Q, \delta)$.

\subsubsection{The canonical point and the internal Boolean algebra of regular languages}

We will observe the canonical point and the internal Boolean algebra of $\ofAset$ in parallel with \cref{sssec:CanonicalPpointSandBooleanAlgebraofLanguages}.

\begin{definition}\demph{The canonical point}\footnote{This terminology is also a special case of \cite{rogers2023toposes}. Furthermore, the definition of this point as a compotite of essential surjective point followed by a hyperconnected geometric morphism, is nothing other than the characterization of toposes of topological monoid actions (\cite[][Theorem 3.20.]{rogers2023toposes}). We will come back to this observation shortly in \cref{ssec:DescriptionByProfiniteWords} and extensively in the follow-up paper.} $\pof$ of $\ofAset$ is the composite of 
    \[
    \begin{tikzcd}
        \pof \colon \Set \ar[r, "p"] & \Aset \ar[r, "h"] & \ofAset.
    \end{tikzcd}
    \]
\end{definition}

Since $\ofAset$ is a pointed topos, there is the associated internal Boolean algebra $\pof_{\ast}(\bool)$ (by \cref{lem:BooleanAlgebrainPointedTopos}). We will prove that it is the Boolean algebra of regular languages.
To prove it, we need the Myhill-Nerode theorem (see \cite{pin2020mathematical}), in our terminology\footnote{Myhill-Nerode theorem in terms of Nerode congruences will appear in the follow-up paper, utilizing the theory of a local state classifier \cite{hora2024internal}.}. 

According to \cref{rmk:YonedaAndMinimalization}, let $ \yo(\ast) \cong \MA$
denote the free $\A$-set, and 
$
    \begin{tikzcd}
        \yo(\ast)\ar[r, "\lceil L\rceil"]&\Lan
    \end{tikzcd}
    $ denote the morphism corresponding to a language $L\in \Lan$ by the Yoneda lemma. Using this, Myhill-Nerode theorem is 
\begin{lemma}[Myhill-Nerode theorem]\label{lem:MyhillNerode}
    For a language $L\in \Lan$, the followings are equivalent:
    \begin{enumerate}
        \item $L$ is regular (i.e., recognized by a finite automaton).
        \item The Yoneda-corresponding morphism 
        \[
    \begin{tikzcd}
        \yo(\ast)\ar[r, "\lceil L\rceil"]&\Lan
    \end{tikzcd}
    \]
    factor through a finite $\A$-set \footnote{A $\A$-set is said to be finite, if its underlying set is finite (\cref{def:finiteAset}). }.
     \item The image of
     \[
    \begin{tikzcd}
        \yo(\ast)\ar[r, "\lceil L\rceil"]&\Lan
    \end{tikzcd}
    \] is finite.
    \item The orbit of $L\in \Lan$, which is $\{w^{-1}L \mid w\in \MA\} \subset \Lan$, is finite.
    
    \end{enumerate}
\end{lemma}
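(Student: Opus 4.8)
The plan is to prove the Myhill--Nerode theorem by establishing the cycle of implications $(1)\Rightarrow(2)\Rightarrow(3)\Rightarrow(4)\Rightarrow(1)$, using the Yoneda/slice-topos picture set up in \cref{rmk:YonedaAndMinimalization} and \cref{cor:CategoricalDescriptionOfLanguageRecognition} to translate between automata and morphisms out of $\yo(\ast)$. The key translation facts I would use repeatedly are: a $\A$-set morphism $\yo(\ast)\to (Q,\delta)$ is the same as a choice of state $q_0\in Q$ (Yoneda), its image is exactly the orbit of $q_0$, and the composite $\yo(\ast)\xrightarrow{\lceil q_0\rceil}(Q,\delta)\xrightarrow{\chi_F}\Lan$ computes the language recognized by $(Q,\delta,F)$ started at $q_0$. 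In particular the image of $\lceil L\rceil$ is precisely the orbit of $L\in\Lan$ under the left-quotient action, so $(3)$ and $(4)$ are literally the same statement once we unwind the definition of the orbit (\cref{def:ofAset}) and the description of the $\A$-set structure on $\Lan$; that implication is essentially a tautology.

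For $(1)\Rightarrow(2)$: if $L$ is recognized by a finite automaton $(Q,\delta,F)$ with start state $q_0$, then by \cref{cor:CategoricalDescriptionOfLanguageRecognition} the recognized language is the composite $\yo(\ast)\xrightarrow{\lceil q_0\rceil}(Q,\delta)\xrightarrow{\chi_F}\Lan$, so $\lceil L\rceil$ factors through the finite $\A$-set $(Q,\delta)$ — note that although $\chi_F$ reflects accept states this does not constrain $(Q,\delta)$ at all, so finiteness of $Q$ is all that is needed. For $(2)\Rightarrow(3)$: if $\lceil L\rceil$ factors through a finite $\A$-set $(Q,\delta)$, then its image is a subobject of a finite set, hence finite; here I would use that $\Aset$ has an epi--mono factorization computed on underlying sets (it is a presheaf topos), so the image in $\Aset$ has finite underlying set. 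For $(4)\Rightarrow(1)$: given that the orbit $\{w^{-1}L\mid w\in\MA\}$ is finite, form the sub-$\A$-set $\mathcal{A}(L)\subset\Lan$ it generates — this is exactly the epi--mono factorization of $\lceil L\rceil$ from \cref{rmk:YonedaAndMinimalization}, it has finite underlying set by hypothesis, and equipping it with accept states $F\cap\mathcal{A}(L) = \{K\in\mathcal{A}(L)\mid \empword\in K\}$ and start state $L$ itself, it recognizes $L$, so $L$ is regular. This is the usual construction of the minimal/Nerode automaton, now read off from the Yoneda picture.

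I expect no single step to be a genuine obstacle; the content is really the bookkeeping of matching the topos-theoretic constructions to the classical ones. The one place requiring a little care is $(4)\Rightarrow(1)$, where one must check that the generated sub-$\A$-set $\mathcal{A}(L)$, with accept set inherited from $F\subset\Lan$ and initial state $L$, genuinely recognizes $L$ and not some other language; this follows because $\mathcal{A}(L)\rightarrowtail\Lan$ is a morphism in $\Atmt$ (it reflects accept states, being a pullback of $F$), hence commutes with the recognition maps to the terminal object $\Lan$, so the language recognized from the state $L\in\mathcal{A}(L)$ equals the image of $L$ under $\mathcal{A}(L)\rightarrowtail\Lan$, which is $L$ by construction. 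Alternatively one can verify directly that $\empword\in w^{-1}L \iff w\in L$ to see that the state reached from $L$ by reading $w$ is an accept state iff $w\in L$. I would also remark in passing that $(2)$ and $(3)$ being equivalent uses only that a subobject of a finite object is finite in $\Set$, and that $(3)\Leftrightarrow(4)$ needs the identification of the image of $\lceil L\rceil$ with the orbit of $L$, which is immediate from the Yoneda lemma since $\yo(\ast)$ is generated by the single element $\empword$.
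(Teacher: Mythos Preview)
Your proposal is correct and follows essentially the same approach as the paper. The paper organizes the argument as three pairwise equivalences---$(1)\Leftrightarrow(2)$ via \cref{rmk:YonedaAndMinimalization}, $(3)\Leftrightarrow(4)$ since the orbit of $L$ is literally the image of $\lceil L\rceil$, and $(2)\Leftrightarrow(3)$ via the epi--mono factorization in $\Aset$---whereas you run it as a cycle and close with an explicit construction of the Nerode automaton for $(4)\Rightarrow(1)$; the underlying ideas and the references you invoke are the same, and your write-up is simply more detailed than the paper's terse proof.
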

\begin{proof}
    The first two conditions are just paraphrases due to \cref{rmk:YonedaAndMinimalization}. The last two conditions are more appearently equivalent, since $\{w^{-1}L \mid w\in \MA\} \subset \Lan$ is the image of $\lceil L\rceil$. The equivalence between second and third conditions follows from the image factorizations in $\Aset$.
\end{proof}

\begin{definition}[orbit-finite $\A$-set of regular languages]
    The orbit-finite $\A$-set of regular languages is denoted by $\Reg$.
\end{definition}
 
\begin{proposition}[The canonical Boolean algebra consists of regular languages]
\label{prop:ReglanIsTheCanonicalBoolean}
The canonical internal Boolean algebra (described in \cref{lem:BooleanAlgebrainPointedTopos}) of the pointed topos $\pof\colon \Set \to \ofAset$  
is isomorphic to 
the orbit-finite $\A$-set of regular languages:
\[\Reg \cong \pof_{\ast}(\bool)\text{   in }\ofAset.\]
\end{proposition}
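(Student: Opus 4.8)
The plan is to compute $\pof_{\ast}(\bool)$ directly by factoring the point $\pof$ as $h \circ p$ and using the two concrete descriptions already in hand: $p_{\ast}(\bool) \cong \Lan$ from \cref{prop:LanguageandPoints}, and $h_{\ast}(Q,\delta) = (Q_{\text{fin}},\delta)$ from \cref{lem:rightadjointOrbitFinite}. Since $\pof_{\ast} = h_{\ast} \circ p_{\ast}$ (direct images compose), we get
\[
\pof_{\ast}(\bool) \;=\; h_{\ast}\bigl(p_{\ast}(\bool)\bigr) \;\cong\; h_{\ast}(\Lan) \;=\; (\Lan)_{\text{fin}},
\]
i.e. the sub-$\A$-set of $\Lan$ consisting of those languages whose orbit $\{w^{-1}L \mid w \in \MA\}$ is finite. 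By the Myhill–Nerode theorem (\cref{lem:MyhillNerode}, equivalence of (1) and (4)), a language $L$ has finite orbit precisely when it is regular, so $(\Lan)_{\text{fin}} = \Reg$ as a sub-$\A$-set of $\Lan$. This identifies the underlying orbit-finite $\A$-set.

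Next I would check that this isomorphism is one of \emph{internal Boolean algebras}, not merely of $\A$-sets. This is essentially automatic: the Boolean algebra structure on $\pof_{\ast}(\bool)$ is the one transported from $\bool$ by the finite-product-preserving functor $\pof_{\ast}$ (the content of \cref{lem:BooleanAlgebrainPointedTopos}), and likewise the structure on $\Lan \cong p_{\ast}(\bool)$ is transported by $p_{\ast}$; since $\pof_{\ast} = h_{\ast} \circ p_{\ast}$ and $h_{\ast}$ also preserves finite products, the Boolean operations on $(\Lan)_{\text{fin}}$ are exactly the restrictions of the pointwise (set-theoretic) Boolean operations on $\Lan$. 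One should remark that $\Reg$ is indeed closed under these operations — regular languages are closed under union, intersection, and complement — which is consistent with $\Reg$ being a sub-Boolean-algebra; alternatively this closure is a formal consequence of $h_{\ast}$ landing in $\ofAset$ together with $\Reg$ being the image of $h_{\ast}$ applied to a Boolean algebra object.

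The one genuinely substantive ingredient is the Myhill–Nerode step, i.e. that "finite orbit in $\Lan$" coincides with "regular"; but this has already been recorded as \cref{lem:MyhillNerode}, so in the present proof it is invoked rather than proved. The only mild subtlety to be careful about is bookkeeping: making sure that the composite $h_{\ast}\circ p_{\ast}$ really is computed as "first take all languages, then cut down to those with finite orbit" and that this agrees with the definition of $\Reg$ as an object of $\ofAset$ (rather than accidentally computing $h_{\ast}$ of something already living in $\ofAset$). So the proof is short:
\[
\pof_{\ast}(\bool) = h_{\ast}p_{\ast}(\bool) \cong h_{\ast}(\Lan) \stackrel{\text{\cref{lem:rightadjointOrbitFinite}}}{=} (\Lan)_{\text{fin}} \stackrel{\text{\cref{lem:MyhillNerode}}}{=} \Reg,
\]
with all identifications respecting the Boolean structure because every functor in sight preserves finite products. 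I do not expect any real obstacle; the work was front-loaded into \cref{lem:rightadjointOrbitFinite} and \cref{lem:MyhillNerode}.
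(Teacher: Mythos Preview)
Your proof is correct and follows exactly the same route as the paper: factor $\pof_{\ast}=h_{\ast}\circ p_{\ast}$, use \cref{prop:LanguageandPoints} to identify $p_{\ast}(\bool)\cong\Lan$, then apply \cref{lem:rightadjointOrbitFinite} and \cref{lem:MyhillNerode} to conclude $h_{\ast}(\Lan)\cong\Reg$. Your additional remarks on the Boolean structure being preserved are a welcome elaboration, but the core argument is identical.
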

\begin{proof}
    We have the following isomorphisms \[\pof_{\ast}(\bool) \cong  h_{\ast}(p_{\ast}(\bool)) \cong h_{\ast}(\Lan),\]
    by \cref{prop:LanguageandPoints}.
    Furthermore, \cref{lem:rightadjointOrbitFinite} and \cref{lem:MyhillNerode} imply that $h_{\ast}(\Lan) \cong \Reg$, which completes the proof.
\end{proof}

\subsection{\texorpdfstring{$\ofAtmt$}{ofAtmt}: The Grothendieck topos of orbit-finite automata}\label{ssec:toposOfLFatmt}


\begin{definition}We adopt the following definitions and terminologies:
\begin{itemize}
    \item An automaton $(Q, \delta, F)$ is \demph{orbit-finite} if its underlying $\A$-set $(Q, \delta)$ is orbit-finite.
    \item The category of orbit-finite automata is denoted by $\ofAtmt$.
\end{itemize}
\end{definition}

In parallel with \cref{thm:AutomataAsSliceTopos}, we obtain the following ``slice description" of $\ofAtmt$.
\begin{proposition}
\label{prop:ofautomataAsSlices}
The category of orbit-finite automata  is equivalent to the slice category
\[
\ofAtmt \simeq \ofAset/\Reg.
\]
\end{proposition}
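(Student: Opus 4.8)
The plan is to mimic the proof of \cref{thm:AutomataAsSliceTopos} as closely as possible, reducing the claim to two ingredients: (i) the general fact about terminal-free slice categories, and (ii) the compatibility of the orbit-finiteness condition on automata with the slice equivalence. Concretely, recall from \cref{cor:FourDescriptionOfAtmt} that an automaton is the same as a $\A$-set morphism $\chi_F \colon (Q,\delta) \to \Reg \hookrightarrow \Lan$, where the set of accept states is $F = \chi_F^{-1}(\{L : \empword \in L\})$, but more to the point, by \cref{prop:ReglanIsTheCanonicalBoolean} every morphism $(Q,\delta) \to \Lan$ whose source is orbit-finite automatically factors through $\Reg$ (since the image of an orbit-finite $\A$-set under a morphism is orbit-finite, and orbit-finite sub-$\A$-sets of $\Lan$ land in $\Reg$). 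So the equivalence $\Atmt \simeq \Aset/\Lan$ of \cref{thm:AutomataAsSliceTopos} restricts to a functor from $\ofAtmt$ into $\Aset/\Lan$, and I want to identify its essential image with $\ofAset/\Reg$.

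First I would set up the comparison functor: given an orbit-finite automaton $(Q,\delta,F)$, \cref{thm:AutomataAsSliceTopos} assigns it the object $\chi_F \colon (Q,\delta) \to \Lan$ of $\Aset/\Lan$; since $(Q,\delta)$ is orbit-finite, this morphism lands in the sub-$\A$-set $\Reg$ (using that $\Reg$ is the \emph{maximal} orbit-finite sub-$\A$-set of $\Lan$, i.e.\ $\Reg = h_\ast(\Lan) \hookrightarrow \Lan$ with the description from \cref{lem:rightadjointOrbitFinite}: every element of the orbit of $\chi_F(q)$ is of the form $w^{-1}\chi_F(q) = \chi_F(qw)$, so the orbit of $\chi_F(q)$ is a quotient of the orbit of $q$, hence finite, hence $\chi_F(q) \in \Reg$). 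This gives an object $\chi_F \colon (Q,\delta) \to \Reg$ of $\ofAset/\Reg$. Conversely, an object of $\ofAset/\Reg$ is a morphism $g\colon (Q,\delta)\to \Reg$ with $(Q,\delta)$ orbit-finite; composing with $\Reg \hookrightarrow \Lan$ and applying \cref{thm:AutomataAsSliceTopos} in reverse produces an automaton which is orbit-finite by definition. Second, I would check these assignments are mutually inverse and functorial — but this is immediate because both are just the restriction of the equivalence of \cref{thm:AutomataAsSliceTopos} (which is an equivalence of categories over $\Aset$, or rather its underlying functors), together with the observation that a morphism in $\Atmt$ between orbit-finite automata is the same as a morphism in $\Aset/\Lan$ over $\Lan$ between the corresponding objects, which (since both objects factor through $\Reg$ and $\Reg \hookrightarrow \Lan$ is monic) is the same as a morphism in $\ofAset/\Reg$.

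The only genuinely non-formal point — and the one I would state explicitly as a lemma or inline argument — is that orbit-finiteness of $(Q,\delta)$ is \emph{detected on either side} of the slice equivalence in a matching way: on the automata side it is literally a condition on the underlying $\A$-set, and on the slice side $\ofAset/\Reg$ the domain of the structure morphism is required to be orbit-finite, so these conditions correspond under the equivalence $\Atmt \simeq \Aset/\Lan$. The subtlety worth a sentence is why it suffices to slice over $\Reg$ rather than over $\Lan$: an object of $\Aset/\Lan$ with orbit-finite domain necessarily factors (uniquely, since $\Reg\rightarrowtail\Lan$ is monic) through $\Reg$, so $\ofAset/\Reg$ is a full subcategory of $\Aset/\Lan$ and it is exactly the one picked out by the orbit-finiteness condition. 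I do not anticipate a real obstacle here; the proof is essentially ``restrict \cref{thm:AutomataAsSliceTopos} along $\Reg \hookrightarrow \Lan$ and use that $\Reg$ is the maximal orbit-finite sub-$\A$-set of $\Lan$,'' and the write-up can be kept to a few lines.
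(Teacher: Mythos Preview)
Your argument is correct. The paper's own proof is the single line ``The same proof as \cref{thm:AutomataAsSliceTopos},'' meaning: apply the adjunction $\pof^{\ast}\dashv \pof_{\ast}$ directly in $\ofAset$, so that by \cref{prop:ReglanIsTheCanonicalBoolean} a morphism $(Q,\delta)\to \Reg$ in $\ofAset$ corresponds to a function $Q\to\bool$, i.e.\ a choice of accept states. Your route is a mild variant: rather than re-running the adjunction inside $\ofAset$, you restrict the already-established equivalence $\Atmt\simeq \Aset/\Lan$ to orbit-finite domains and then use that $\Reg = h_{\ast}(\Lan)$ is the maximal orbit-finite sub-$\A$-set of $\Lan$ (so every such morphism factors uniquely through $\Reg\rightarrowtail\Lan$). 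The two arguments are equivalent at the level of content, since $\pof_{\ast}=h_{\ast}\circ p_{\ast}$ and $\Reg = h_{\ast}(\Lan)$ is exactly the fact you invoke; your version just makes the factorization-through-$\Reg$ step explicit, at the cost of a few extra lines. Either presentation is fine, and yours has the minor pedagogical advantage of showing concretely how the orbit-finite case sits inside the general one.
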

\begin{proof}
    The same proof as \cref{thm:AutomataAsSliceTopos}.
\end{proof}

Then we obtain a variant of \cref{cor:CategoricalDescriptionOfLanguageRecognition} for regular languages.
\begin{corollary}
\label{ofAtmtIsTopos}
    The category $\ofAtmt$ is a Grothendieck topos.
\end{corollary}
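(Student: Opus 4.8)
The plan is to deduce \cref{ofAtmtIsTopos} from the slice description \cref{prop:ofautomataAsSlices}, namely $\ofAtmt \simeq \ofAset/\Reg$, together with the fundamental fact that a slice of a Grothendieck topos is again a Grothendieck topos. Since \cref{prop:OrbitFiniteFormsATopos} already tells us that $\ofAset$ is a Grothendieck topos, and $\Reg \in \ob(\ofAset)$ is a genuine object of that topos (it is the orbit-finite $\A$-set of regular languages, shown in \cref{prop:ReglanIsTheCanonicalBoolean} to be $\pof_{\ast}(\bool)$), the slice $\ofAset/\Reg$ is a Grothendieck topos by the fundamental theorem of topos theory (see, e.g., \cite[][A2.3.2]{johnstone2002sketchesv1}). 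Composing this equivalence with \cref{prop:ofautomataAsSlices} gives the claim.

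Concretely, the steps are: first, invoke \cref{prop:OrbitFiniteFormsATopos} to know $\ofAset$ is a Grothendieck topos; second, note that $\Reg$ is an object of $\ofAset$ (this is exactly the content recorded just before, in \cref{prop:ReglanIsTheCanonicalBoolean}); third, apply the fundamental theorem of topos theory, so that $\ofAset/\Reg$ is a Grothendieck topos; fourth, transport this along the equivalence of \cref{prop:ofautomataAsSlices} to conclude that $\ofAtmt$ is a Grothendieck topos. No further computation is required.

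I do not expect any real obstacle here: the entire statement is a one-line consequence of results already in place. If anything, the only point worth a sentence of care is making sure the slice is taken over an object that actually lives in $\ofAset$ (rather than merely in $\Aset$) — but that is precisely why the paper first established $\Reg \cong \pof_{\ast}(\bool)$ in $\ofAset$ in \cref{prop:ReglanIsTheCanonicalBoolean}, so this is immediate. One could optionally remark that, unlike $\Atmt$, the topos $\ofAtmt$ is generally not a presheaf topos, paralleling the contrast between $\Aset$ and $\ofAset$, but this is not needed for the proof itself.

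\begin{proof}
By \cref{prop:ofautomataAsSlices}, $\ofAtmt \simeq \ofAset/\Reg$. By \cref{prop:OrbitFiniteFormsATopos}, $\ofAset$ is a Grothendieck topos, and $\Reg$ is an object of it (\cref{prop:ReglanIsTheCanonicalBoolean}). Since any slice of a Grothendieck topos over one of its objects is again a Grothendieck topos, $\ofAset/\Reg$ is a Grothendieck topos, and hence so is $\ofAtmt$.
\end{proof}
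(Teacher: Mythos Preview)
Your proof is correct and matches the paper's own argument exactly: the paper's proof is the single sentence ``Every slice category of a Grothendieck topos is again a Grothendieck topos,'' relying on \cref{prop:ofautomataAsSlices} and \cref{prop:OrbitFiniteFormsATopos} just as you do.
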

\begin{proof}
    Every slice category of a Grothendieck topos is again a Grothendieck topos.
\end{proof}

\begin{proposition}
\label{prop:toposTheoreticRegularLanguageRecognition}
We can observe the recognition of regular languages in the Grothendieck topos $\ofAtmt$:
\begin{itemize}
    \item The terminal object of $\ofAtmt$ is the orbit-finite automaton of regular languages $\Reg$.
    \item Furthermore, for an orbit-finite automaton $(Q, \delta, F)$, the unique map $! \colon (Q, \delta, F) \to \Reg$ sends a state $q \in Q$ to the regular language that $(Q, \delta, q, F)$ recognizes with the start state $q$.
\end{itemize}
\end{proposition}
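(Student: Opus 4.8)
The plan is to reduce everything to the slice description $\ofAtmt \simeq \ofAset/\Reg$ established in \cref{prop:ofautomataAsSlices}, exactly mirroring the proof of \cref{cor:CategoricalDescriptionOfLanguageRecognition} one level down. The terminal object of any slice category $\C/c$ is the identity $\id_c \colon c \to c$, so the terminal object of $\ofAset/\Reg$ is $\id_{\Reg}$. Under the equivalence of \cref{prop:ofautomataAsSlices}, an object of $\ofAset/\Reg$ is an orbit-finite $\A$-set $(Q,\delta)$ together with a morphism $(Q,\delta) \to \Reg$ in $\ofAset$, which by \cref{prop:ReglanIsTheCanonicalBoolean} (identifying $\Reg \cong \pof_{\ast}(\bool)$) corresponds to a function $Q \to \bool$, i.e.\ a choice of accept set $F \subset Q$; the identity map $\id_{\Reg}$ corresponds to the tautological accept set $\{L \in \Reg \mid \empword \in L\}$, which is precisely the accept-state structure on $\Reg$ making it the orbit-finite automaton of regular languages. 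This gives the first bullet.

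For the second bullet, I would first note that the unique map $! \colon (Q,\delta,F) \to \Reg$ in $\ofAtmt$ is obtained by transporting the unique map to $\id_{\Reg}$ in $\ofAset/\Reg$, so on underlying $\A$-sets it is simply the classifying map $\chi \colon (Q,\delta) \to \Reg$ of $F$. It then remains to identify $\chi(q)$ for a state $q$ with the regular language recognized by $(Q,\delta,q,F)$. Here I would invoke the compatibility of the canonical point $\pof = h \circ p$ with the analogous situation over $\Aset$: the full embedding $h^{\ast}\colon \ofAset \to \Aset$ is the inverse image of the hyperconnected $h$, and it sends the $\ofAset$-classifying map $(Q,\delta) \to \Reg$ to the $\Aset$-classifying map $(Q,\delta) \to \Lan$ of the same subset $F$ (since $h^{\ast}$ preserves finite limits, hence subobject classifiers and characteristic maps, and $h^{\ast}\Reg \rightarrowtail \Lan$ is the inclusion of regular languages into all languages). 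By \cref{cor:CategoricalDescriptionOfLanguageRecognition}, that latter map sends $q$ to the language recognized by $(Q,\delta,q,F)$ — which lies in $\Reg$ precisely because $(Q,\delta)$ is orbit-finite, by the Myhill–Nerode theorem (\cref{lem:MyhillNerode}, equivalence of being recognized by a finite/orbit-finite automaton with having finite orbit). Hence $\chi(q)$ is exactly that recognized regular language.

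Concretely, the key steps in order are: (1) cite \cref{prop:ofautomataAsSlices} and the general fact that $\id_{\Reg}$ is terminal in $\ofAset/\Reg$; (2) unwind the equivalence to see that $\id_{\Reg}$ corresponds to $\Reg$ with its canonical accept states, giving the first bullet; (3) observe that the unique morphism in $\ofAtmt$ is, on underlying $\A$-sets, the characteristic map of $F$ with respect to $\Reg \cong \pof_{\ast}\bool$; (4) use that $h^{\ast}$ preserves the characteristic-map construction to reduce to the already-proven \cref{cor:CategoricalDescriptionOfLanguageRecognition} over $\Aset$; (5) conclude via Myhill–Nerode that the recognized language is regular, so the computation indeed lands in $\ofAset$.

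The only genuinely delicate point — and the step I would expect to need the most care — is (4): making precise that the characteristic map computed inside the topos $\ofAset$ agrees, after applying the embedding $h^{\ast}$, with the characteristic map computed inside $\Aset$. This is not automatic from "$h^{\ast}$ is left exact" alone, since $\Reg$ is not the subobject classifier of $\ofAset$; rather one must use that $\Reg \cong \pof_{\ast}\bool$ and $\Lan \cong p_{\ast}\bool$ together with $\pof_{\ast} = h_{\ast} \circ p_{\ast}$ and the monic counit $h^{\ast}h_{\ast} \Rightarrow \id$ (the inclusion $h^{\ast}\Reg = \Lan_{\mathrm{fin}} \rightarrowtail \Lan$), so that classifying a subobject by $\Reg$ and then including into $\Lan$ is the same as classifying it directly by $\Lan$. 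Once that naturality square is pinned down, everything else is a direct transcription of the $\Atmt$ case.
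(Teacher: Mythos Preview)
The paper gives no explicit proof of this proposition; it is evidently meant as the direct analogue of \cref{cor:CategoricalDescriptionOfLanguageRecognition}, and your argument is exactly that analogue carried out correctly.

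One small comment: the parenthetical ``since $h^{\ast}$ preserves finite limits, hence subobject classifiers and characteristic maps'' is a misstep (left-exact functors need not preserve subobject classifiers, and $\Reg$ is not the subobject classifier of $\ofAset$ anyway), but you catch and repair this yourself in the final paragraph. The clean way to phrase your step~(4) is purely via the adjunctions: for orbit-finite $(Q,\delta)$ one has $h^{\ast}(Q,\delta)=(Q,\delta)$, so
\[
\ofAset\bigl((Q,\delta),\Reg\bigr)\;\cong\;\Aset\bigl(h^{\ast}(Q,\delta),\Lan\bigr)\;=\;\Aset\bigl((Q,\delta),\Lan\bigr),
\]
and this bijection is given by post-composition with the monic counit $h^{\ast}\Reg\rightarrowtail\Lan$. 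Hence the $\Reg$-classifying map of $F$ followed by the inclusion into $\Lan$ is literally the $\Lan$-classifying map of $F$, to which \cref{cor:CategoricalDescriptionOfLanguageRecognition} applies. Your step~(5) then becomes a consistency check rather than an independent ingredient.
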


In the follow-up paper, the content of this subsection will be generalized to other hyperconencted geometric morphisms form $\Aset$, so that we can consider other classes of (possibly non-regular) languages.

\section{Four Morita equivalent definitions of the Boolean-ringed topos of regular languages \texorpdfstring{$(\ofAset,\Reg)$}{ofAsetR}}
\label{sec:reglan}


In this section, we will observe that the following four characterizations of regular languages provide four different \dq{Morita equivalent} definitions of the single Boolean-ringed topos $(\ofAset, \Reg)$.


A language $L$ is regular, if and only if
\begin{description}
    \item[\cref{ssec:DescriptionMyhillNerode} Myhill-Nerode theorem] its orbit $\{w^{-1}L \mid w\in \MA\}$ is finite.  
    \item[\cref{ssec:DescriptionDFA} DFA] 
    it is recognized by a deterministic finite automaton.
    \item[\cref{ssec:DescriptionByFiniteMonoids} Finite monoids.]  
    there is a monoid homomorphism $f\colon \MA \to M$ to a finite monoid $M$ and a subset $S\in \Pow(M)$ such that $L= f^{-1}(S)$.
    \item[\cref{ssec:DescriptionByProfiniteWords} Profinite words] there is a clopen subset $S\subset \proMA$ such that $L$ is the inverse image of $S$ along the canonical embedding $\MA \to \proMA$.
\end{description}
Put simply, what we aim to do in this section is to categorify the \dq{equivalence between these conditions} and lift it to \dq{isomorphism between structures.} 

\subsection{Description by the canonical point \texorpdfstring{$=$}{=} Myhill-Nerode theorem} \label{ssec:DescriptionMyhillNerode}

We adopt the following terminology:
\begin{definition}
    A \demph{Boolean-ringed topos}\footnote{We prefer the word `Boolean ring,' just because it is more conventional to say `ringed topos' rather than `algebra-ed topos.' } is a Grothendieck topos equipped with an internal Boolean algebra (as a \dq{structure sheaf}).
\end{definition}

For example, every pointed topos is canonically a Boolean-ringed topos by \cref{lem:BooleanAlgebrainPointedTopos}.

\begin{definition}
    \demph{The Boolean-ringed topos of regular languages} $(\ofAset, \Reg)$ is the (pointed) topos $\ofAset$ (\cref{def:ofAset,prop:OrbitFiniteFormsATopos}) equipped with the canonical internal boolean algebra (structure sheaf) $\Reg$ (\cref{prop:ReglanIsTheCanonicalBoolean}).
\end{definition}

As we have seen in \cref{prop:ReglanIsTheCanonicalBoolean}, the Boolean-ringed topos $(\ofAset, \Reg)$ is the one induced by the canonical point
\[
\pof \colon \Set \to \ofAset.
\]
This is essentially equivalent to the Myhill-Nerode theorem (\cref{lem:MyhillNerode}). 

\begin{remark}[Connection with the Nerode-congruence.]
Usually, the Myhill-Nerode theorem is stated in terms of (Nerode-)congruence.
    The follow-up paper will make the connection with Nerode-congruence more explicit. We will observe that the local state classifier $\Xi$ (defined in \cite{hora2024internal}) consists of right congruences of $\MA$ and that the canonical morphism $\xi_{\Lan}\colon \Lan \to \Xi$ sends a language $L$ to its Nerode-congruence ${\sim}_{L}$. The Myhill-Nerode theorem will be paraphrased as a pullback diagram along the morphism $\xi_{\Lan}$.
\end{remark}
\subsection{Description by DFA}\label{ssec:DescriptionDFA}
\begin{definition}We adopt the following terminologies:
\label{def:finiteAset}
\begin{itemize}
    \item A $\A$-set $(Q,\delta)$ is \demph{finite} if the set of states $Q$ is a finite set.
    \item The category of finite $\A$-sets is denoted by $\fAset$.
    \item Let $J$ be the Grothendieck topology generated by jointly surjective families.
\end{itemize}
\end{definition}

The next lemma is also mentioned in \cite{uramoto2017semi}.
\begin{lemma}[Equivalence between the underlying topoi]
\label{lem:SiteOffAset}
    $\Sh(\fAset, J) \simeq \ofAset$ 
\end{lemma}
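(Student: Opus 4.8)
The goal is to identify the sheaf topos $\Sh(\fAset, J)$ with the topos $\ofAset$ of orbit-finite $\A$-sets. The plan is to exhibit $\fAset$ as a full dense subcategory of $\ofAset$, equipped precisely with the canonical (coverage) topology inherited from $\ofAset$, and then invoke the Comparison Lemma (Grothendieck--Verdier): if $\mathcal{D} \hookrightarrow \E$ is a full subcategory of a Grothendieck topos which is dense with respect to a subcanonical topology, then $\Sh(\mathcal{D}, J_{\text{ind}}) \simeq \E$, where $J_{\text{ind}}$ is the induced topology.

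First I would check that every finite $\A$-set is orbit-finite, so $\fAset$ is genuinely a full subcategory of $\ofAset$; the inclusion is clearly full and faithful. Next, the key density step: every orbit-finite $\A$-set $(Q,\delta)$ is a (small, even filtered) colimit of its finite sub-$\A$-sets --- indeed, for each state $q$ the orbit $\{qw \mid w\in\MA\}$ is a finite sub-$\A$-set containing $q$, and these finite subobjects are directed under union (a finite union of finite orbits is finite), with union all of $Q$. Hence $\fAset$ generates $\ofAset$ under filtered colimits, and in particular the representables coming from $\fAset$ form a separating set. Then I would verify that the topology $J$ generated by jointly surjective families of morphisms of finite $\A$-sets is exactly the topology induced on $\fAset$ by the canonical topology on $\ofAset$ (equivalently, by the epimorphism coverage): a family $\{f_i \colon Q_i \to Q\}$ in $\fAset$ is jointly surjective iff the induced map $\coprod_i Q_i \to Q$ is an epimorphism in $\Aset$ (equivalently in $\ofAset$, since the forgetful functor to $\Set$ creates epis and $\ofAset \hookrightarrow \Aset$ preserves them), and the canonical topology on any Grothendieck topos is the effective-epimorphism topology, which here coincides with the surjection topology because colimits are computed pointwise in $\Set$. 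One must also confirm this topology is subcanonical on $\ofAset$, i.e. that representables $\ofAset(-, Q)$ are sheaves for jointly surjective covers --- this follows because such covers are effective epimorphic families in the topos $\ofAset$, and the canonical topology is by definition the finest subcanonical one.

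With density plus the identification of the induced topology in hand, the Comparison Lemma gives $\Sh(\fAset, J) \simeq \ofAset$ directly. The main obstacle I anticipate is the careful bookkeeping around the topology: one must be sure that ``generated by jointly surjective families'' (the syntactic description in \cref{def:finiteAset}) agrees with the induced/canonical topology, which requires knowing that a jointly surjective family of finite $\A$-sets is not just an epimorphic family but an \emph{effective} one in $\ofAset$ --- and that $\ofAset$ really is a Grothendieck topos with the expected (pointwise-in-$\Set$) construction of colimits and epimorphisms, which is exactly the content of \cref{prop:OrbitFiniteFormsATopos} and its proof via \cref{lem:CriteriaOfHyperconnected}. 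A secondary subtlety is smallness: $\fAset$ is essentially small (a finite $\A$-set is determined up to iso by a finite underlying set and a function to it), so $\Sh(\fAset, J)$ is a legitimate Grothendieck topos and no size issues arise. Once these points are nailed down, the equivalence is formal.
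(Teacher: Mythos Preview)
Your proposal is correct and takes essentially the same approach as the paper: the paper's one-line proof invokes Giraud's theorem, observing that $\fAset$ is a full generating subcategory of the Grothendieck topos $\ofAset$ (immediate from orbit-finiteness), which is precisely your Comparison Lemma argument with the bookkeeping (density via finite orbits, matching $J$ to the induced canonical topology, smallness) left implicit.
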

\begin{proof}
    Since $\fAset$ is a full subcategory of $\ofAset$, due to Giraud's theorem, it is enough to show that finite $\A$-sets form a generating set of $\ofAset$. This is easily implied by the definition of orbit-finiteness.
\end{proof}

\begin{remark}\label{rmk:semi-GaloisCategories}
    Unless $\A$ is empty, the category $\fAset$ is not an elementary topos, a fortiori, not a Grothendieck topos. However, this belongs to a good class of categories, \demph{semi-Galois categories} \cite{uramoto2017semi}.
\end{remark}
We define the $J$-sheaf of $\DFA \colon \fAset^{\op}\to \Set$ (\demph{d}eterministic \demph{f}inite \demph{a}utomata).
The set $\DFA(Q,\delta)$ is intended to be the set of all DFA structures over $(Q,\delta)$, i.e. $\DFA(Q, \delta)\coloneqq \{(Q, \delta, F)\mid F\subset Q\}$. This can be simplified as follows.

\begin{definition}
    \demph{The sheaf of DFA} is the $J$-sheaf of Boolean algebras
\[\DFA \colon \fAset^{\op}\to \BoolAlg \colon (Q, \delta) \to \Pow(Q). \]
\end{definition}

The proof of the following proposition verifies that $\DFA$ is indeed a $J$-sheaf of Boolean algebras.

\begin{proposition}
\label{prop:SiteDescriptionWithDFA}
    The two Boolean-ringed topoi are equivalent:
    \[
    (\ofAset, \Reg) \simeq (\Sh(\fAset, J), \DFA). 
    \]
\end{proposition}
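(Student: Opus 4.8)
The plan is to establish the equivalence of Boolean-ringed topoi in two stages: first the underlying topoi, then the structure sheaves. The equivalence $\Sh(\fAset, J) \simeq \ofAset$ is already \cref{lem:SiteOffAset}, so the real content is identifying the structure sheaves under this equivalence. Recall that under the equivalence $\Sh(\fAset, J) \simeq \ofAset$, the sheaf associated to an orbit-finite $\A$-set $X$ is the representable-style presheaf $(Q,\delta) \mapsto \Hom_{\Aset}((Q,\delta), X)$ (restricted to finite $(Q,\delta)$), since finite $\A$-sets are a dense generating family. So I must show that the sheaf $\DFA \colon (Q,\delta) \mapsto \Pow(Q)$ corresponds to $\Reg$ under this recipe, i.e.\ that there is a natural isomorphism $\Pow(Q) \cong \Hom_{\Aset}((Q,\delta), \Reg)$ of Boolean algebras, natural in $(Q,\delta) \in \fAset$, and compatible with the internal Boolean algebra structures.

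The key steps, in order: (1) Verify that $\DFA$ is a $J$-sheaf of Boolean algebras. Since $J$ is generated by jointly surjective families, this amounts to checking that a family of subsets $F_i \subset Q_i$, compatible along a jointly surjective family $\{f_i \colon (Q_i,\delta_i) \to (Q,\delta)\}$ in the sense $f_i^{-1}(F) $-coherence, glues uniquely to an $F \subset Q$; this is immediate because jointly surjective families in $\Aset$ are (regular) epimorphic, and $\Pow(-)$ turns colimits into limits (it is $\Hom(-,2)$), with the Boolean operations computed pointwise. (2) Identify $\DFA$ with the restriction of $\Hom_{\Aset}(-, \Lan)$ to $\fAset$: for a finite $\A$-set $(Q,\delta)$, \cref{prop:LanguageandPoints} (the iso $\Lan \cong p_\ast(\bool)$) gives $\Hom_{\Aset}((Q,\delta), \Lan) \cong \Hom_{\Set}(Q, \bool) \cong \Pow(Q)$, naturally and as Boolean algebras. (3) Observe that since $(Q,\delta)$ is finite, hence orbit-finite, every $\A$-set map $(Q,\delta) \to \Lan$ factors through $\Reg = h_\ast(\Lan)$ (the maximal orbit-finite subobject, \cref{lem:rightadjointOrbitFinite}), because the image is orbit-finite; so $\Hom_{\Aset}((Q,\delta),\Lan) \cong \Hom_{\ofAset}((Q,\delta), \Reg)$. (4) Conclude: under the sheafification equivalence of \cref{lem:SiteOffAset}, $\Reg \in \ofAset$ corresponds to the sheaf $(Q,\delta) \mapsto \Hom_{\ofAset}((Q,\delta),\Reg) \cong \DFA(Q,\delta)$, and all isomorphisms above are compatible with the Boolean operations, so the identification is one of Boolean-ringed topoi.

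The main obstacle I expect is not any single hard argument but rather the bookkeeping of naturality and Boolean-algebra-compatibility across the chain of isomorphisms --- in particular, being careful that the internal Boolean algebra structure on $\Reg$ (inherited from $p_\ast(\bool)$ via $h_\ast$, which preserves finite products and hence internal algebras, cf.\ \cref{lem:BooleanAlgebrainPointedTopos}) matches the pointwise Boolean structure on $\DFA(Q,\delta) = \Pow(Q)$ under the equivalence of sites. Concretely, one checks that the global-sections / representable functors send an internal Boolean algebra to an external one with pointwise operations, so the comparison reduces to step (2), where $\Pow(Q) \cong \bool^Q$ as Boolean algebras is transparent. A minor secondary point is confirming the sheaf condition computation in step (1) handles the $J$-covers correctly, but this is routine given that $J$ is generated by jointly epimorphic families and $\Aset \to \Set$ creates these.
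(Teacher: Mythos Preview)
Your proposal is correct and follows essentially the same route as the paper: both invoke \cref{lem:SiteOffAset} for the underlying equivalence, then transport $\Reg$ along the restricted-Yoneda equivalence $N\colon X \mapsto \ofAset((-), X)$ and compute $N(\Reg)(Q,\delta) \cong \Set(Q,\bool) \cong \Pow(Q)$. The only cosmetic difference is that the paper uses $\Reg \cong \pof_{\ast}(\bool)$ directly (\cref{prop:ReglanIsTheCanonicalBoolean}) to get $\ofAset((Q,\delta),\Reg)\cong \Set(Q,\bool)$ in one step, whereas you first compute $\Hom_{\Aset}((Q,\delta),\Lan)\cong \Pow(Q)$ via \cref{prop:LanguageandPoints} and then factor through $\Reg$ using orbit-finiteness; your extra verification that $\DFA$ is a $J$-sheaf is also superfluous, since this follows a posteriori from the isomorphism $\DFA \cong N(\Reg)$.
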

\begin{proof}
    The equivalence of topoi is due to \cref{lem:SiteOffAset}. 
    Since we have proven that $(\ofAset, \Reg)$ is an internal Boolean algebra and the equivalence $N\colon \ofAset \xrightarrow{{\simeq}}\Sh(\fAset, J)$ is given by
    \[
    X \mapsto \left(N(X) \colon (Q, \delta) \mapsto \ofAset((Q, \delta), X)\right),
    \]
    the corresponding internal Boolean algebra is given by $N(\Reg)$
    \[
    N(\Reg) \colon (Q, \delta) \mapsto \ofAset((Q, \delta), \Reg).
    \]
    Then, \cref{prop:ReglanIsTheCanonicalBoolean} implies
    \[
    N(\Reg)(Q, \delta) \cong \ofAset((Q,\delta), \Reg) \cong \Set(Q, \bool) \cong \DFA(Q,\delta),
    \]
    which completes the proof.
\end{proof}

Notice that this equivalence of two Boolean-ringed topoi actually capture the notion of language recognition by DFA, since the correpondence $\DFA(Q,\delta) \cong \ofAset((Q,\delta), \Reg)$ is nothing but the language recognition (\cref{rmk:ThePointExhibitsMooremMachine,prop:toposTheoreticRegularLanguageRecognition}).

\subsection{Description by finite monoids} \label{ssec:DescriptionByFiniteMonoids}
This subsection aims to reconstruct $\Reg$ in terms of the recognizability by finite monoids.

\begin{definition}\label{def:Amonoid}
    A \demph{finite $\A$-monoid} is a pair of a finite monoid $M$ and a $\A$-indexed family of elements $\{m_a\}_{a\in \A}$.
\end{definition}

\begin{definition}\label{def:AFinMon}
    We define the category $\AFinMon$ 
    as follows:
    \begin{itemize}
        \item an object is a finite $\A$-monoid $(M, \{m_a\}_{a\in \A})$,
        \item a morphism $(M,\{m_a\}_{a\in \A}) \to (M',\{m'_a\}_{a\in \A})$ is a function $f\colon M \to M'$ such that $f(x m_a) = f(x) m'_a$ for any $a\in \A$ and $x\in M$.
    \end{itemize}

\end{definition}
Notice that $\AFinMon$ is a full subcategory of $\ofAset$ (not of  $\MA/\mathbf{Monoids}$), since a finite $\A$-monoid $(M, \{m_a\}_{a\in \A})$ can be regarded as a finite $\A$-set $(M , \overline{\delta})$ with $\overline{\delta}(m, a)\coloneqq m \cdot m_a$.

Letting $J$ denote the Grothendieck topology, generated by the jointly surjective families in $\ofAset$, we obtain the following proposition
\begin{proposition}\label{prop:EquivOfFiniteMonoids}
The following two Boolean ringed topoi are equivalent:
\[
    (\ofAset, \Reg) \simeq (\Sh(\AFinMon, J), \Pow),
\]
where $\Pow$ denotes the power set functor $\AFinMon ^{\op}\xrightarrow{U} \FinSet^{\op} \xrightarrow{\Pow} \Set$.
\end{proposition}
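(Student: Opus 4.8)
\subsection*{Proof proposal}

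The plan is to run the proof of \cref{prop:SiteDescriptionWithDFA} again, with $\AFinMon$ in place of $\fAset$. There are two things to establish: first, the underlying equivalence of topoi $\Sh(\AFinMon, J) \simeq \ofAset$; second, that under this equivalence the internal Boolean algebra $\Reg$ is transported to the power-set sheaf $\Pow$. Once both are in hand, the statement follows exactly as before, and the identification $\Pow(M,\{m_a\}) \cong \ofAset((M,\overline\delta),\Reg)$ records precisely recognition by finite monoids.

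For the first point, by Giraud's theorem (as in \cref{lem:SiteOffAset}) it suffices to show that the finite $\A$-monoids, regarded as a full subcategory of $\ofAset$ via $(M,\{m_a\}_{a\in\A}) \mapsto (M,\overline\delta)$, form a generating family. Given an orbit-finite $\A$-set $(Q,\delta)$ and a state $q\in Q$, let $M_q$ be the transition monoid of the orbit $\{qw \mid w\in\MA\}$, i.e. the quotient of $\MA$ identifying two words when they act identically on that orbit. Since the orbit is finite, $M_q$ is a finite monoid, and together with the classes of the letters $a\in\A$ it is a finite $\A$-monoid; moreover $[w]\mapsto qw$ is a well-defined morphism of $\A$-sets $(M_q,\overline\delta)\to (Q,\delta)$ with image the orbit of $q$. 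Letting $q$ range over $Q$, these maps form a jointly surjective family out of objects of $\AFinMon$, so $\AFinMon$ is a generating family and $\Sh(\AFinMon, J)\simeq\ofAset$.

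For the second point, the equivalence $N\colon \ofAset \xrightarrow{\simeq} \Sh(\AFinMon, J)$ is again $X \mapsto \big((M,\{m_a\}) \mapsto \ofAset((M,\overline\delta), X)\big)$, so the transported Boolean algebra is $N(\Reg)$. Using $\Reg \cong \pof_{\ast}\bool$ (\cref{prop:ReglanIsTheCanonicalBoolean}) together with the adjunction $\pof^{\ast}\dashv\pof_{\ast}$, where $\pof^{\ast} = p^{\ast}h^{\ast}\colon \ofAset\to\Set$ is the underlying-set functor, one computes
\[
N(\Reg)(M,\{m_a\}) = \ofAset\big((M,\overline\delta), \pof_{\ast}\bool\big) \cong \Set\big(\pof^{\ast}(M,\overline\delta), \bool\big) \cong \Set(M, \bool) \cong \Pow(M),
\]
naturally in $(M,\{m_a\})$, which is exactly $\Pow$ as defined in the statement. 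This simultaneously verifies that $\Pow$ is a $J$-sheaf of Boolean algebras, since $N$ preserves finite limits and carries the internal Boolean algebra $\Reg$ to it. Finally, unwinding the isomorphism as in the DFA case: a subset $S\subseteq M$ corresponds to the composite $\yo(\ast)\xrightarrow{\lceil 1\rceil}(M,\overline\delta)\xrightarrow{\chi_S}\Reg$, i.e. to $f^{-1}(S)$ for the monoid homomorphism $f\colon\MA\to M$ determined by $\{m_a\}_{a\in\A}$, so the equivalence genuinely encodes recognition by finite monoids.

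The only mildly delicate step is the generation claim: one must check that the transition monoid of a finite orbit, equipped with the letter-classes, really is an object of $\AFinMon$ and that $[w]\mapsto qw$ respects the $\A$-actions. Everything after that is a transport of structure along $N$, formally identical to \cref{prop:SiteDescriptionWithDFA}.
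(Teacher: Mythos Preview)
Your argument is correct and follows the same overall strategy as the paper: establish that $\AFinMon$ is a generating full subcategory of $\ofAset$, then transport $\Reg$ along the resulting equivalence exactly as in \cref{prop:SiteDescriptionWithDFA}. The only difference is in the choice of covering finite $\A$-monoid. You take the transition monoid $M_q$ of the orbit of $q$ (the image of $\MA$ in $\End(\{qw\mid w\in\MA\})^{\op}$) and the evaluation map $[w]\mapsto qw$, whereas the paper takes the full opposite endomorphism monoid $\End(Q)^{\op}$ of a \emph{finite} $\A$-set $(Q,\delta)$, with $m_a=\delta(-,a)$ and the map $\phi\mapsto\phi(q_0)$. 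Your construction is a bit more economical (the monoid is smaller, and you work directly with orbit-finite $\A$-sets rather than first reducing to finite ones), while the paper's avoids the need to check that the quotient by ``act identically on the orbit'' is a monoid congruence. Both yield the same generation claim, and the transport-of-structure half is identical.
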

\begin{proof}
    The proof is almost the same as the proof of \cref{prop:SiteDescriptionWithDFA}. The only non-trivial part is proving that $\AFinMon \hookrightarrow \ofAset$ is a generating full subcategory. It is enough to show that, for an arbitrary finite $\A$-set $(Q, \delta)$ and an element $q_0 \in Q$, there exists a finite $\A$-monoid $(M, \{m_a\}_{a\in \A})$ and a $\A$-set homomorphism $f\colon (M, \overline{\delta}) \to (Q, \delta)$ such that $q_0 \in \Image(f)$. Let us consider $\End(Q)^{\op}$, the opposite monoid of the endofunction monoid $\End(Q)$. In other words, elements of the monoid $\End(Q)^{\op}$ are functions $Q \to Q$, and the multiplication $\phi \cdot \psi$ is defined to be $\psi\circ \phi$. The family of endomorphisms $\{\delta({-}, a) \colon Q \to Q\}_{a\in \A}$ makes it a finite $\A$-monoid. The function
    \[
    f\colon \End(Q)^{\op} \to Q \colon \phi \mapsto \phi(q_0)
    \]
    is a $\A$-set morphism, since $(\phi \cdot \delta({-}, a))(q_0) = \delta(\phi(q_0), a)$. The element $q_0$ belongs to the image $\Image(f)$, since the identity function $\id_Q \in \End(Q)^{\op}$ is sent to $q_0$.
\end{proof}

\begin{remark}[This is not satisfying enough!]
    The category $\AFinMon$ is not quite monoid-theoretic, in the sense that two isomorphic objects in $\AFinMon$ might be non-isomorphic as monoids. A more natural way to understand monoid-theoretic aspects, including the theory of syntactic monoids, will be proposed in the follow-up paper.
\end{remark}

\begin{remark}[Other generating sets]
The arguments in \cref{ssec:DescriptionDFA} and \cref{ssec:DescriptionByFiniteMonoids} only utilize the fact that the considered full subcategories, namely $\fAset$ and $\AFinMon$, are generating subcategories. We can do the same for other generating sets of objects.
\end{remark}

\subsection{Description by clopen subsets of profinite words}\label{ssec:DescriptionByProfiniteWords} This subsection needs a few preliminaries on topological monoids and profinite words. 
First, recall basics of the topos of topological monoid actions. See \cite{rogers2023toposes} for an extensive study on this topic.

\begin{lemma}[Recall on the toposes of topological monoid actions. \cite{rogers2023toposes}]
    For a topological monoid\footnote{\cite{rogers2023toposes} deals with \demph{monoid equipped with a topology}, whose multiplication is not necessarily continuous. 
    } $M$, 
    \begin{itemize}
        \item \demph{the topos of continuous actions of $M$} $\Cont(M)$ is defined to be a full subcategory of $\PSh(M)$ that consists of $M$-sets $(X, X\times M \to X)$ such that the action map $X\times M \to X$ is continuous, with respect to the discrete topology on $X$ and the product topology on $X\times M$.
        \item $\Cont(M)$ is a Grothendieck topos.
        \item The forgetful functor $U \colon \Cont(M) \to \Set$ has a right adjoint, and the adjunction
        \[
        \ADJ{\Set}{}{\Cont(M)}{U}
        \]
        defines a surjective geometric morphism $\Set \to \Cont(M)$. We call this point $p\colon \Set \to \Cont(M)$ \demph{the canonical point}.
    \end{itemize}
\end{lemma}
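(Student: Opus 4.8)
The plan is to regard this lemma as recalled from \cite{rogers2023toposes}; the first item is a definition, so the work is in the last two.

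For the claim that $\Cont(M)$ is a Grothendieck topos, I would verify Giraud's axioms. The key observation is that both finite limits and small colimits of diagrams in $\Cont(M)$ may be formed in $\PSh(M)$ and automatically land back in $\Cont(M)$: the underlying set of such a (co)limit is the corresponding (co)limit in $\Set$ with the discrete topology, and continuity of an action $X \times M \to X$ is exactly the requirement that each set $\{(x,m) \mid xm = x_0\}$ be open in $\coprod_{x \in X} M$, a condition stable under the subset, product, and quotient operations used to build finite limits and colimits. Consequently $\Cont(M)$ is complete and cocomplete and inherits from $\PSh(M)$ the exactness properties in Giraud's list (universal and disjoint coproducts, effective equivalence relations), and it is well-powered. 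It then remains to produce a small generating family, for which I would take the isomorphism classes of cyclic continuous $M$-actions: these form a set, since each is a quotient of the underlying set of $M$, and they generate because every continuous action is the union of its cyclic subactions. (Alternatively one may quote Rogers's presentation of $\Cont(M)$ as the topos of sheaves on a small site extracted from $M$.)

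For the third item I would construct the right adjoint to $U$ explicitly. Put $R(S) \coloneqq \{f\colon M \to S \mid f \text{ locally constant}\}$, with $M$ acting by $(f\cdot m)(n) \coloneqq f(mn)$; one checks this action is continuous and that currying gives a bijection $\Cont(M)(X, R(S)) \cong \Set(UX, S)$, natural in $X$ and $S$, sending $g$ to the map $x \mapsto g(x)(e)$ where $e$ is the unit of $M$. Since finite limits in $\Cont(M)$ are computed on underlying sets by the remark above, $U$ is left exact; being moreover a left adjoint, the pair $(U \dashv R)$ is a geometric morphism $p\colon \Set \to \Cont(M)$ with inverse image $p^{*} = U$. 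Finally $p$ is a surjection because $p^{*} = U$, a forgetful functor, is faithful.

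I expect the only genuinely delicate points to be the topological bookkeeping carried out in \cite{rogers2023toposes}: that continuity really is preserved under the colimits used in the generating-set argument (so that the cyclic subactions of a continuous action are again continuous and the colimit of continuous actions is continuous), and that the transpose of a map $UX \to S$ is indeed locally constant, which is where the structure of the topology on $M$ enters. Everything else is formal manipulation of adjunctions together with Giraud's theorem.
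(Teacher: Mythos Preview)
The paper does not prove this lemma at all: it is stated as a ``Recall'' from \cite{rogers2023toposes} with no accompanying proof environment. So your sketch goes beyond what the paper does, and the relevant comparison is between your outline and the argument in Rogers.

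Your Giraud-axiom argument for the second item is sound. There is, however, a genuine gap in your construction of the right adjoint. You set $R(S) = \{f\colon M \to S \mid f \text{ locally constant}\}$ and claim ``one checks this action is continuous.'' This can fail for general topological monoids. Continuity of the action on $R(S)$ at $(f,m_0)$ requires an open $U \ni m_0$ with $f(mn)=f(m_0 n)$ for \emph{all} $n$ and all $m\in U$; local constancy of $f$ only gives such a neighbourhood for each fixed $n$, and the intersection over all $n$ need not be open. For a concrete failure, take $M=\Q$ (additive, subspace topology from $\R$), $S=\{0,1\}$, and let $f$ be the indicator of $\bigcup_{n\ge 2}(n-\tfrac{1}{n\sqrt 2},\,n+\tfrac{1}{n\sqrt 2})\cap\Q$: this is locally constant on $\Q$, but for every nonzero rational $m$ one can choose $k$ large enough that $f(k)=1$ while $f(k+m)=0$, so $m\mapsto f\cdot m$ is not locally constant at $0$.

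The correct right adjoint is the \emph{continuous part} of $S^{M}$, namely $\{f\in S^{M}\mid m\mapsto f\cdot m \text{ is locally constant}\}$, which is a strictly smaller set in general (though every such $f$ is in particular locally constant, so your $R(S)$ contains it). With this definition your adjunction argument goes through verbatim: for $\phi\colon UX\to S$ the transpose $g(x)(m)=\phi(xm)$ lies in the continuous part because $g(x)\cdot m = g(xm)$ and $m\mapsto xm$ is locally constant by continuity of $X$. Alternatively, once you know $\Cont(M)$ is a Grothendieck topos and that $U$ preserves colimits (as you already argued), the adjoint functor theorem gives $R$ without an explicit formula. It is worth noting that for \emph{compact} $M$ --- the only case the paper actually uses, via $\proMA$ --- a finite-subcover argument shows the two descriptions of $R(S)$ coincide, so your formula happens to be correct in the application.
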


\begin{lemma}\label{lem:CanonicalBooleanAlgebraOfcptMonoid}
    For a compact topological monoid $M$, the corresponding internal Boolean algebra of the pointed topos
    \[
    p\colon \Set \to \Cont(M)
    \]
    (given by \cref{lem:BooleanAlgebrainPointedTopos}) is the Boolean algebra of clopen subsets $\Cl(M)$.
\end{lemma}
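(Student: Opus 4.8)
The plan is to unwind the definition of the canonical point's direct image and compare it with the clopen sets, using the compactness of $M$ as the crucial ingredient. By \cref{lem:BooleanAlgebrainPointedTopos}, the internal Boolean algebra is $p_{\ast}(\bool)$, where $p\colon \Set \to \Cont(M)$ is the canonical point with $p^{\ast} = U$ the forgetful functor. First I would identify the underlying $M$-set of $p_{\ast}(\bool)$: since $p_{\ast}$ is right adjoint to $U$, for any continuous $M$-action $(X, \alpha)$ we have $\Cont(M)(X, p_{\ast}\bool) \cong \Set(UX, \bool) \cong \Pow(UX)$, so $p_{\ast}(\bool)$ represents the functor $X \mapsto \Pow(UX)$ on $\Cont(M)$. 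Evaluating at the free action $\yo(\ast)$ on $M$ itself (with right translation) and invoking Yoneda, I expect the underlying set of $p_{\ast}\bool$ to be a suitable subset of $\Pow(M)$, with $M$ acting by $m \cdot S = \{x \in M \mid mx \in S\}$; more precisely $p_{\ast}\bool$ is the largest subobject of $\Pow(M)$ (with this action) lying in $\Cont(M)$.

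The heart of the argument is then to show that a subset $S \subseteq M$, viewed as an element of this $M$-set, has a \emph{continuous} orbit map $M \to \Pow(M)$, $m \mapsto m^{-1}S$, if and only if $S$ is clopen; and that the resulting sub-$M$-set of $\Pow(M)$ is exactly $\Cl(M)$ with its Boolean structure. Continuity of the action on $X = p_{\ast}\bool$ means: for each element $S$, the map $M \to X$, $m \mapsto m \cdot S$ is continuous when $X$ carries the discrete topology, i.e. the stabilizer-type sets $\{m \mid m \cdot S = S'\}$ are open. I would first check that if $S$ is clopen then $\{m \mid m^{-1}S = S'\}$ is open using continuity of multiplication and compactness (a clopen set has finitely many ``translates'' in an appropriate sense is not quite right in general, but openness of each preimage follows from $S$ clopen plus $M$ compact Hausdorff via the usual tube-lemma argument). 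Conversely, if the orbit map of $S$ is continuous, then in particular $\{m \mid m \cdot S \ni e\} = \{m \mid m \in S\} = S$ is a union of pieces on which the orbit is locally constant, forcing $S$ open; applying the same to the complement gives $S$ closed. Finally, since $p_{\ast}$ preserves finite products and hence the internal Boolean algebra structure transported from $\bool$, and since the Boolean operations on $\Pow(M)$ restrict to the usual ones on $\Cl(M)$, the identification $p_{\ast}\bool \cong \Cl(M)$ is an isomorphism of internal Boolean algebras.

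The main obstacle I anticipate is the precise topological argument for the ``continuous orbit $\iff$ clopen'' equivalence, in particular the forward direction (clopen $\Rightarrow$ open preimages under the orbit map). This is where compactness of $M$ genuinely enters: without it one only gets that clopen sets form a subobject, not that this subobject is all of $p_{\ast}\bool$. I would handle it by fixing $S$ clopen and $m_0$ with $m_0^{-1}S = T$; for each $x \in M$ continuity of $(m,y) \mapsto my$ at $(m_0, x)$ yields open neighborhoods whose product lands in $S$ or in $M \setminus S$ according to whether $x \in T$; covering the compact $M$ by finitely many such $x$-neighborhoods and intersecting the corresponding $m_0$-neighborhoods produces an open $U \ni m_0$ with $m^{-1}S = T$ for all $m \in U$. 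The reverse direction and the Boolean-structure bookkeeping are then routine, so I would state them briefly. It is also worth remarking that this lemma is the compact-monoid specialization of the general description of $p_{\ast}\bool$ for $\Cont(M)$, which is presumably why the paper isolates the compact case: it is exactly the setting that will be applied to $M = \proMA$.
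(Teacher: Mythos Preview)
Your proposal is correct and follows essentially the same route as the paper: factor $p_*\bool$ as the maximal continuous sub-$M$-set of $\Pow(M)$ with the left-quotient action, identify membership with the condition that the fibers of $m\mapsto m^{-1}S$ are open, prove $p_*\bool\subset\Cl(M)$ by evaluating at $e$ (no compactness needed), and prove $\Cl(M)\subset p_*\bool$ by the tube-lemma/finite-subcover argument you sketch. The only wobble is the sentence ``without it one only gets that clopen sets form a subobject, not that this subobject is all of $p_*\bool$'': the inclusion that fails without compactness is $\Cl(M)\subset p_*\bool$, not the reverse, so it is cleaner to say that without compactness one still has $p_*\bool\subset\Cl(M)$ but cannot conclude that $\Cl(M)$ lies in $\Cont(M)$.
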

\begin{proof}
      The point $p\colon \Set \to \Cont(M)$ is decomposed into the composite of $\Set \to \PSh(M) \to \Cont(M)$, where $\PSh(M)$ denotes the topos of discrete actions. This decomposition allows us to calculate the internal Boolean algebra $B$ as a Boolean subalgebra of $\Pow(M)$. What we will prove is that $B$ coincides with the Boolean subalgebra $\Cl(M) \subset \Pow(M)$. 
      
      The calculation of the direct image functor $\PSh(M) \to \Cont(M)$ implies that a subset $S \subset M$ belongs to $B$ if and only if 
      every equivalence class of the equivalence relation
      \[
      a\sim_S b \iff a^{-1}S = b^{-1}S
      \]
      is open (See \cite[][Scholium 2.9.]{rogers2023toposes} for the details.). Here, $a^{-1} S$ denotes $\{m\in M \mid am\in S\}$.
      
      First, we will prove that $B \subset \Cl(M)$ (without the assumption of the compactness). Since for every $s\in S$, $s\sim_S b \iff s^{-1}S = b^{-1}S \implies e\in b^{-1}S \iff b\in S$, $S\in B$ implies that $S$ is open. $S$ is closed, because $M\setminus S$ also belongs to $B$.

      We will prove $B \supset \Cl(M)$ (using the assumption of compactness). Let $S \subset M$ be an arbitrary clopen subset, and $a\in M$ be an arbitrary element. It is enough to construct an open neighborhood $a\in U \subset M$ such that $\forall b\in U,\; a\sim_S b$.
      For each $m\in M$, we can take open neighborhoods $a\in U_m \subset M$ and $m\in V_m \subset M$ such that 
      \[
      \forall a'\in U_m, \forall m'\in V_m, \; (a'm'\in S \iff am\in S),
      \]
      since the multiplication map $M\times M \to M$ is continuous. The compactness of $M$ allows us to pick a finite subcover $M = V_{m_1} \cup \dots \cup V_{m_n}$. 
      Take an arbitrary element $b$ of $U\coloneqq U_{m_1}\cap \dots \cap U_{m_n}$. What we need to prove is that $a\sim_S b$. Take an arbitrary element $m\in M$. For $1\leq i\leq n$ with $m\in V_{m_i}$, we have 
      \[
      bm \in S \iff am_i \in S \iff am \in S,
      \]
      which proves that $a\sim_S b$.
\end{proof}
In particular, if the topological monoid $M$ is profinite, then the corresponding internal Boolean algebra is its Stone dual (equipped with the canonical continuous $M$ action).

\begin{definition}
    Let $\proMA$ denote the topological monoid of profinite words, i.e., the profinite completion of the monoid $\MA$. Let $\Cont(\proMA)$ be its topos of continuous actions.
\end{definition}

For a detailed explanation of the notion of profinite words and its relation to automata theory, see \cite{pin2020mathematical}. 
See also \cite{uramoto2017semi} for the description of the topos $\Cont(\proMA)$.

\begin{lemma}\label{lem:EquivAsPointedTopoiForTopologicalMonoids}
    The pointed topos $p\colon \Set \to \Cont(\proMA)$ is equivalent to $p\colon \Set \to \ofAset$ as pointed topoi.
    \[
    \begin{tikzcd}[column sep = 10pt]
        &&&\ofAset\ar[dd,"e", "\rotatebox{90}{$\simeq$}"']\\
        \Set\ar[rrru,"p"]\ar[rrrd,"p"']&&{\cong}&\\
        &&&\Cont(\proMA)
    \end{tikzcd}
    \]
\end{lemma}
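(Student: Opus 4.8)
The plan is to establish the equivalence of pointed topoi in two stages: first identify the underlying topoi $\ofAset \simeq \Cont(\proMA)$, and then check that this equivalence is compatible with the two canonical points. For the first stage I would exhibit a generating full subcategory of $\ofAset$ that can be matched with one for $\Cont(\proMA)$. The natural candidate is $\fAset$ (or $\AFinMon$): by \cref{lem:SiteOffAset} we have $\ofAset \simeq \Sh(\fAset, J)$, so it suffices to produce an equivalence $\Sh(\fAset, J) \simeq \Cont(\proMA)$ of sites-to-topoi. On the profinite side, continuous finite $\proMA$-sets are exactly continuous actions of finite discrete quotients of $\proMA$, i.e. finite $\MA$-sets on which the action factors through a finite quotient monoid; but \emph{every} finite $\A$-set has this property, since the image of $\MA$ in $\End(Q)^{\op}$ is a finite monoid (cf. the argument in the proof of \cref{prop:EquivOfFiniteMonoids}). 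Hence the finite continuous $\proMA$-sets coincide, as a category, with $\fAset$, and one checks the jointly-surjective topology $J$ transports correctly; then Giraud/comparison-of-sites gives $\Cont(\proMA) \simeq \Sh(\fAset, J) \simeq \ofAset$.

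\textbf{Second stage: matching the points.} Both $p \colon \Set \to \ofAset$ and $p \colon \Set \to \Cont(\proMA)$ have inverse image part the forgetful functor to $\Set$. Under the equivalence $e\colon \ofAset \xrightarrow{\simeq} \Cont(\proMA)$ constructed above — which on the generating subcategory is literally the identity $\fAset \to \fAset$ and in particular commutes with the underlying-set functors — the two forgetful functors are identified. Since a point of a Grothendieck topos is determined up to isomorphism by its inverse image functor (a finite-limit-preserving, colimit-preserving functor to $\Set$, by the theory of geometric morphisms), the triangle of inverse images commutes up to natural isomorphism, and dualizing gives the commuting triangle of direct images displayed in the statement. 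This is really all the diagram asserts: $e \circ p \cong p$ as geometric morphisms $\Set \to \Cont(\proMA)$.

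\textbf{Main obstacle.} The routine points — transporting the topology, checking $\AFinMon$-style generation — are genuinely routine given the earlier results. The one genuinely delicate point is verifying that finite \emph{continuous} $\proMA$-actions are precisely the finite $\A$-sets, i.e. that the continuity constraint in $\Cont(\proMA)$ is automatic on finite sets and, conversely, imposes nothing beyond being a $\A$-set. The forward direction uses that $\proMA$ is profinite, so its continuous finite discrete quotients are cofinal among all finite quotients; the converse uses the $\End(Q)^{\op}$ trick to factor any finite $\A$-action through a finite quotient of $\MA$, which then extends uniquely and continuously to $\proMA$ by the universal property of profinite completion. Assembling these into a genuine equivalence of categories (not just a bijection on objects) requires checking that morphisms match, which is where I would spend the most care; but conceptually it is the statement that $\proMA = \varprojlim M$ over finite $\A$-monoids $M$, so $\Cont(\proMA)$ is the filtered "union" of the $\PSh(M)$, whose colimit is exactly $\ofAset$.
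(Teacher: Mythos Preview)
Your proposal is correct and complete in outline, but it follows a genuinely different route from the paper's own argument.

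\textbf{What the paper does.} Rather than matching sites, the paper builds the comparison functor \emph{directly as a geometric morphism}. The inclusion $\iota\colon \MA \hookrightarrow \proMA$ induces a geometric morphism $g\colon \Aset \to \Cont(\proMA)$ (inverse image = restriction along $\iota$). One checks $g$ is connected (faithfulness is immediate; fullness uses density of $\MA$ in $\proMA$), and compactness of $\proMA$ forces every orbit of a continuous action to be finite, so $g^*$ lands in $\ofAset$. Thus $g$ factors through the hyperconnected $h\colon \Aset \to \ofAset$ as $g = e\circ h$, and since $g,h$ are connected so is $e$. Essential surjectivity of $e^*$ is then reduced to hitting a generating set, and here the paper uses exactly your $\End(Q)^{\op}$ observation (via \cref{prop:EquivOfFiniteMonoids}). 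Compatibility with the canonical points falls out for free from the factorization diagram, since everything is built from monoid homomorphisms over the terminal monoid.

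\textbf{What you do differently, and trade-offs.} You bypass the intermediate $\Aset$ entirely and instead argue at the level of sites: identify $\fAset$ with the finite objects of $\Cont(\proMA)$, check the topologies match, and invoke Giraud. This is a perfectly valid alternative and is arguably more symmetric --- both topoi are presented by the \emph{same} site, so the equivalence is tautological once the identification of finite objects is made. The cost is that you must separately verify (i) that finite continuous $\proMA$-sets generate $\Cont(\proMA)$ (you gesture at this with ``filtered union'' but should say explicitly: compactness $\Rightarrow$ every orbit is finite $\Rightarrow$ every object is a colimit of finite subobjects), and (ii) compatibility with the points, which you handle correctly by noting both inverse images are the underlying-set functor. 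The paper's approach gets both of these essentially for free from the factorization $g = e \circ h$, at the price of routing through $\Aset$; your approach is more self-contained but requires these two extra checks. Neither is clearly superior.
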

\begin{proof}
Since $\Aset$ is the topos of (continuous) actions of the discrete topological monoid $\MA$, the canonical inclusion $\iota \colon \MA \rightarrowtail \proMA$ induces a geometric morphism $g\colon \Aset \to \proMA$, where $g^*$ is given by the restriction of the action along $\iota$:
\[
\begin{tikzcd}
    \MA \ar[r,"\iota"]& \proMA \ar[r]\ar[r]& \End(Q).
\end{tikzcd}
\]

By the construction, $g^*$ is faithful. The denseness of $\iota \colon \MA \to \proMA$ implies that $g^*$ is full, i.e., the geometric morphism $g\colon \Aset \to \Cont(\proMA)$ is connected. The compactness of $\proMA$ implies that each orbit of continuous $\proMA$ action is finite, i.e., the geometric morphism $g$ factors through the hyperconnected geometric morphism $h$ as follows.
        \[
    \begin{tikzcd}[column sep = 10pt]
        &&\Aset\ar[r,"h"]\ar[ddr,"g"'{name=G}]&\ofAset\ar[dd, "e"{name=E}, dashed]\ar[from= G, to=E, phantom, "\cong"]\\
        \Set\ar[rru,"p"]\ar[rrrd,"p"']&&{\cong}&\\
        &&&\Cont(\proMA)
    \end{tikzcd}
    \]
    Since $g$ and $h$ are connected, so is $e$. The remaining task is to prove $e^*$ is essentially surjective. Since $e^*$ is coreflective (and hence creates all colimits), it suffices to prove that the essential image of $e^*$ contains a generating set. 
    By definition of profinite completion, every finite quotient monoid $\MA \twoheadrightarrow M$ is a continuous quotient of $\proMA$, which implies that the canonical action $M \times \A \to M$ belongs to the essential image of $e^*$. The same argument to \cref{prop:EquivOfFiniteMonoids} completes the proof.
\end{proof}

\begin{proposition}\label{prop:ReglanAsProfiniteClopen}
    The following two Boolean ringed topoi are equivalent:
\[
    (\ofAset, \Reg) \simeq (\Cont(\proMA), \Cl(\proMA)).
\]
\end{proposition}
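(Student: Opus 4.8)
The plan is to combine the two equivalences already available: the equivalence of pointed topoi $\pof\colon \Set \to \ofAset$ and $p \colon \Set \to \Cont(\proMA)$ from \cref{lem:EquivAsPointedTopoiForTopologicalMonoids}, together with the identification of the canonical Boolean algebra of $\Cont(\proMA)$ as the clopen algebra $\Cl(\proMA)$ from \cref{lem:CanonicalBooleanAlgebraOfcptMonoid} (applicable since $\proMA$ is a profinite, hence compact, topological monoid). The point is that \emph{both} structure sheaves in the statement are, by construction, the canonical internal Boolean algebra of a pointed topos, namely $p_{\ast}(\bool)$ in the respective sense of \cref{lem:BooleanAlgebrainPointedTopos}.

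First I would recall that $(\ofAset, \Reg)$ is, by \cref{prop:ReglanIsTheCanonicalBoolean}, exactly the Boolean-ringed topos induced by the canonical point $\pof\colon \Set \to \ofAset$, i.e.\ $\Reg \cong \pof_{\ast}(\bool)$. Next, by \cref{lem:CanonicalBooleanAlgebraOfcptMonoid} applied to $M = \proMA$, the Boolean-ringed topos induced by $p \colon \Set \to \Cont(\proMA)$ is $(\Cont(\proMA), \Cl(\proMA))$. So the claim reduces to: an equivalence of pointed topoi induces an equivalence of the associated Boolean-ringed topoi. This is the key general observation, and it is essentially formal: if $e\colon \ofAset \xrightarrow{\simeq} \Cont(\proMA)$ is an equivalence with $e \circ \pof \cong p$ (which is precisely the commuting triangle asserted in \cref{lem:EquivAsPointedTopoiForTopologicalMonoids}), then $e_{\ast} \pof_{\ast}(\bool) \cong p_{\ast}(\bool)$, and since $e_{\ast}$ is (part of) an equivalence of topoi it preserves finite products and hence internal Boolean algebra structures, so $e$ carries the structure sheaf $\Reg \cong \pof_{\ast}(\bool)$ to the structure sheaf $\Cl(\proMA) \cong p_{\ast}(\bool)$ as internal Boolean algebras.

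Concretely I would write: by \cref{lem:EquivAsPointedTopoiForTopologicalMonoids} there is an equivalence $e \colon \ofAset \to \Cont(\proMA)$ with an isomorphism $e \circ \pof \cong p$ of points; applying the direct image $2$-functor gives $e_{\ast} \circ \pof_{\ast} \cong p_{\ast}$ as functors $\Set \to \Cont(\proMA)$, so evaluating at $\bool$ yields $e_{\ast}(\Reg) \cong e_{\ast}(\pof_{\ast}\bool) \cong p_{\ast}(\bool) \cong \Cl(\proMA)$, using \cref{prop:ReglanIsTheCanonicalBoolean} and \cref{lem:CanonicalBooleanAlgebraOfcptMonoid}. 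Since $e_{\ast}$ preserves finite products, this isomorphism is one of internal Boolean algebras, which is exactly the asserted equivalence $(\ofAset, \Reg) \simeq (\Cont(\proMA), \Cl(\proMA))$.

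The main obstacle — though it is more bookkeeping than genuine difficulty — is making sure the comparison of structure sheaves is along the \emph{same} equivalence $e$ that realizes the underlying topos equivalence, and that the Boolean operations genuinely transport; this is guaranteed because all the operations ($\wedge, \vee, \neg, \top, \bot$) on $p_{\ast}\bool$ are by definition images under $p_{\ast}$ (resp.\ $\pof_{\ast}$) of the finite-product-defined operations on $\bool$, and $e_{\ast}$ commutes with $\pof_{\ast}$ up to the fixed natural isomorphism while preserving finite products. One should also note that $\Cont(\proMA)$ is indeed pointed by its canonical point and $\proMA$ is compact, so \cref{lem:CanonicalBooleanAlgebraOfcptMonoid} applies; both facts are recorded in the preceding lemmas. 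Hence no new input beyond \cref{lem:CanonicalBooleanAlgebraOfcptMonoid} and \cref{lem:EquivAsPointedTopoiForTopologicalMonoids} is needed.
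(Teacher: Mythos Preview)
Your proposal is correct and follows essentially the same approach as the paper: the paper's proof simply says ``This immediately follows from \cref{lem:CanonicalBooleanAlgebraOfcptMonoid} and \cref{lem:EquivAsPointedTopoiForTopologicalMonoids},'' and your argument just unpacks the implicit reasoning that an equivalence of pointed topoi transports the canonical Boolean algebra $p_{\ast}\bool$ (with its finite-product-defined operations) along $e_{\ast}$. No additional ingredients are used beyond those two lemmas and \cref{prop:ReglanIsTheCanonicalBoolean}, exactly as in the paper.
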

\begin{proof}
    This immediately follows from \cref{lem:CanonicalBooleanAlgebraOfcptMonoid} and \cref{lem:EquivAsPointedTopoiForTopologicalMonoids}.
\end{proof}

As a summary of this section, we obtain the following theorem:
\begin{theorem}
\label{thm:MoritaEquivalentOfFourDescriptions}
    The following four Boolean-ringed topoi are all equivalent to the Boolean-ringed topos of regular languages $(\ofAset, \Reg)$.
    \begin{description}
        \item[Myhill-Nerode]  $(\ofAset, p_{\ast}(\bool))$
        \item[DFA] $(\Sh(\fAset, J), \DFA)$
        \item[Finite Monoids and their subsets] $(\Sh(\AFinMon, J), \Pow)$
        \item[Profinite Words and its clopen subsets] $(\Cont(\proMA), \Cl(\proMA))$
    \end{description}
\end{theorem}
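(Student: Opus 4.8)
The plan is to read \cref{thm:MoritaEquivalentOfFourDescriptions} as a consolidation of the four equivalences proved in \cref{ssec:DescriptionMyhillNerode,ssec:DescriptionDFA,ssec:DescriptionByFiniteMonoids,ssec:DescriptionByProfiniteWords}, glued together by transitivity of equivalence through the common vertex $(\ofAset, \Reg)$. Before doing so I would pin down the notion at stake: an equivalence of Boolean-ringed topoi is an equivalence $F$ of the underlying Grothendieck topoi together with an isomorphism, \emph{as internal Boolean algebras}, between the transported structure sheaf and the given one. The object $(\ofAset, \Reg)$ is a legitimate Boolean-ringed topos by \cref{prop:ReglanIsTheCanonicalBoolean}, which identifies $\Reg$ with the canonical internal Boolean algebra $\pof_{\ast}(\bool)$ of the pointed topos $\pof\colon \Set \to \ofAset$; this already settles the \textbf{Myhill--Nerode} line, reading $p_{\ast}(\bool)$ in the statement as $\pof_{\ast}(\bool)$ and taking $F=\id_{\ofAset}$.

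For the \textbf{DFA} line I would simply invoke \cref{prop:SiteDescriptionWithDFA}: the equivalence $N\colon \ofAset \xrightarrow{\simeq} \Sh(\fAset, J)$ of \cref{lem:SiteOffAset} sends $\Reg$ to the sheaf $(Q,\delta)\mapsto \ofAset((Q,\delta),\Reg)$, which by \cref{prop:ReglanIsTheCanonicalBoolean} and the Moore-machine adjunction (\cref{rmk:ThePointExhibitsMooremMachine}) computes to $\Pow(Q) = \DFA(Q,\delta)$, and this identification is one of internal Boolean algebras since $N$ preserves finite products. The \textbf{finite monoids} line is handled the same way via \cref{prop:EquivOfFiniteMonoids}, the only extra ingredient being that $\AFinMon \hookrightarrow \ofAset$ is a generating full subcategory, which is precisely what is checked there via $\End(Q)^{\op}$. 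Finally, the \textbf{profinite words} line is \cref{prop:ReglanAsProfiniteClopen}, which packages \cref{lem:EquivAsPointedTopoiForTopologicalMonoids} (the equivalence of pointed topoi $\ofAset \simeq \Cont(\proMA)$) with \cref{lem:CanonicalBooleanAlgebraOfcptMonoid} (for compact $\proMA$, the canonical internal Boolean algebra of $\Cont(\proMA)$ is $\Cl(\proMA)$).

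With the four equivalences in hand, the proof concludes by transitivity: each of the four Boolean-ringed topoi is equivalent to $(\ofAset, \Reg)$, hence they are mutually equivalent. I do not expect a genuine obstacle at this stage, since the mathematical content has already been distributed into the four preceding subsections; the one point requiring care is purely bookkeeping — verifying that in each case the comparison isomorphism of structure sheaves is an isomorphism of \emph{internal Boolean algebras}, not merely of underlying objects. This is uniform: every functor used to transport the structure sheaf ($\pof_{\ast}$, $h_{\ast}$, the site equivalences $N$, and the equivalence $e$ of \cref{lem:EquivAsPointedTopoiForTopologicalMonoids}) preserves finite products, hence internal Boolean-algebra structure, exactly as in the proof of \cref{lem:BooleanAlgebrainPointedTopos}. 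The resulting picture is the Morita equivalence advertised in \cref{fig:FourPOV}.
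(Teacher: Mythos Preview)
Your proposal is correct and matches the paper's approach exactly: the theorem is stated in the paper as a summary with no separate proof, and its content is precisely the conjunction of \cref{prop:ReglanIsTheCanonicalBoolean}, \cref{prop:SiteDescriptionWithDFA}, \cref{prop:EquivOfFiniteMonoids}, and \cref{prop:ReglanAsProfiniteClopen}, which you invoke in the right order. Your added bookkeeping remark about finite-product preservation ensuring the comparisons are isomorphisms of internal Boolean algebras is a welcome clarification the paper leaves implicit.
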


\appendix
\section{Preliminaries on hyperconnected geometric morphism}\label{appendix:Hyperconnected}
This appendix aims to recall the notion of hyperconnected geometric morphisms. See \cite{johnstone2002sketchesv1}, or \cite{johnstone1981factorization} for more details.

\begin{definition}[Hyperconnected geometric morphism]\label{def:hyperconnected}
A geometric morphism $f\colon \E \to \F$ is called \demph{hyperconnected} if its inverse image functor $f^{\ast}$ is fully faithful (i.e., $f$ is connected) and satisfy the following equivalent conditions:
\begin{itemize}
    \item the essential image of $f^*$ is closed under subquotients. 
    \item the counit $\epsilon \colon f^* f_* \Rightarrow\id_{\E}$ is monic.
\end{itemize}   
\end{definition}

\begin{lemma}\label{lem:CriteriaOfHyperconnected}
    For a Grothendieck topos $\E$, if a full subcategory $\iota \colon \F \hookrightarrow \E$ is closed under
    \begin{itemize}
        \item small coproducts,
        \item finite products, and
        \item subquotients (subobjects and quoteint objects),
    \end{itemize}
    then $\F$ is also a Grothendieck topos, and there is a hyperconnected geoemteric morphism $h \colon \E \to \F$, whose inverse image functor $h^* \colon \F \to \E$ coincides with the embedding functor $\iota$.
\end{lemma}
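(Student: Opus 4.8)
The plan is to verify one of the two equivalent conditions of \cref{def:hyperconnected} essentially by hand, building the geometric morphism $h$ from scratch. The overall strategy is threefold: first observe that $\F$ computes its finite limits and small colimits \emph{as in} $\E$; then construct the right adjoint $h_{\ast}$ as a join of subobjects, which makes $h^{\ast}=\iota$ the inverse image of a connected geometric morphism with monic counit, hence hyperconnected; and finally confirm that $\F$ is a Grothendieck topos via Giraud's theorem.

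First I would pin down how limits and colimits behave in $\F$. A finite limit in a topos is built from finite products and equalizers; since $\F$ is closed under finite products and an equalizer of a pair $A \rightrightarrows B$ is a subobject of $A$, the hypotheses give that $\F$ has all finite limits and that $\iota$ preserves (indeed creates) them. Dually, every small colimit is built from small coproducts and coequalizers; a coequalizer of $A \rightrightarrows B$ is a quotient of $B$, so closure under small coproducts and quotients shows that $\F$ has all small colimits, again computed as in $\E$, and that $\iota$ preserves them. In particular $\iota$ is left exact.

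Next I would construct $h_{\ast}$. For $X \in \E$, let $h_{\ast}X \hookrightarrow X$ be the join, taken in the subobject lattice $\Sub(X)$, of all subobjects of $X$ lying in $\F$; concretely $h_{\ast}X$ is the image of the canonical map $\coprod_{A} A \to X$, where $A$ ranges over a representative set of $\F$-subobjects of $X$. Well-poweredness of the Grothendieck topos $\E$ guarantees this index is a small set, the coproduct lies in $\F$ by closure under small coproducts, and its image is a quotient of that coproduct, hence again in $\F$. Any $f\colon B \to X$ with $B \in \F$ factors as $B \twoheadrightarrow \Image(f) \hookrightarrow X$, and $\Image(f)$, being a quotient of $B$ and a subobject of $X$, is one of the $A$'s; thus $f$ factors (uniquely, since $h_{\ast}X \hookrightarrow X$ is monic) through $h_{\ast}X$. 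This exhibits $\iota h_{\ast}X \to X$ as terminal in the comma category $(\iota \downarrow X)$, so $h_{\ast}$ is right adjoint to $\iota$, with counit the monic inclusion $h_{\ast}X \hookrightarrow X$. Setting $h^{\ast} \coloneqq \iota$, left exactness from the previous step makes $h\colon \E \to \F$ a geometric morphism; $\iota$ is fully faithful, so $h$ is connected; and the counit is monic (equivalently, the essential image $\F$ of $h^{\ast}$ is closed under subquotients, exactly our hypothesis), so by \cref{def:hyperconnected} $h$ is hyperconnected.

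Finally I would check that $\F$ is a Grothendieck topos via Giraud's theorem. Because finite limits and small colimits in $\F$ are those of $\E$, the exactness axioms — disjoint, pullback-stable coproducts and effectivity of equivalence relations — transfer verbatim from $\E$. It remains to exhibit a small generating set. Fix a small generating set $\{G_i\}$ of $\E$; for $Y \in \F$ the images $\Image(G_i \to Y)$ of all maps $G_i \to Y$ are jointly epimorphic in $\E$, and each is a quotient of some $G_i$ and a subobject of $Y \in \F$, hence lies in $\F$. The quotients of a fixed $G_i$ form a small set by co-well-poweredness of $\E$, so the collection $\mathcal{G}$ of $\F$-objects arising as quotients of some $G_i$ is small and generates $\F$. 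The main obstacle is thus purely a matter of size: the constructions of $h_{\ast}$ and of $\mathcal{G}$ each threaten to range over a proper class, and the crux is that the well-poweredness and co-well-poweredness of the \emph{Grothendieck} topos $\E$ — beyond mere elementary-topos structure — collapse these to small data; once handled, every verification reduces to the observation that $\F$ inherits its (co)limits from $\E$.
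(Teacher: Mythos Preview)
Your proof is correct. The construction of the right adjoint $h_{\ast}$ and the verification that $h$ is hyperconnected match the paper's argument essentially verbatim (the paper also takes $h_{\ast}X$ to be the maximum $\F$-subobject of $X$, which is exactly your join). Where you diverge is in establishing that $\F$ is a Grothendieck topos: the paper observes that $\iota \dashv h_{\ast}$ is a lex coreflection, hence lex comonadic, so that $\F$ is the category of coalgebras of a lex comonad on $\E$ and therefore an elementary topos; it then cites an external result to upgrade ``elementary'' to ``Grothendieck.'' You instead verify Giraud's axioms directly, noting that the exactness conditions transfer because finite limits and small colimits in $\F$ are computed as in $\E$, and building a small generating set from $\F$-quotients of a generating set of $\E$ via co-well-poweredness. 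Your route is more self-contained and avoids both the comonad machinery and the external citation; the paper's route is shorter on the page but leans on heavier background.
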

\begin{proof}
    Under the assumption, $\iota$ admits a right adjoint $R\colon \E \to \F$, which sends an object $X\in \ob\E$ to the maximum subobject belonging to (the essential image of the embedding of) $\F$. 
    \[
    \ADJ{\E}{R}{\F}{\iota}
    \]
    Since $\iota$ preserves all finite limits, which are constructed by fintie products and (regular) subobjects, this adjunction is lex coreflective, in particular, lex comonadic. This proves that $\F$ is a category of coalgebras of the lex comonad $\iota \circ R$, and that $\F$ is an elementary topos. Furthermore, since $\iota$ is closed under subquoteint, the adjunction $\iota \dashv R$ defines a hyperconnected geometric morphism
    \[
    \begin{tikzcd}
        \E \ar[r, "h"]& \F.
    \end{tikzcd}
    \]
    Since $\E$ is a Grothendieck topos, we can prove that $\F$ is also a Grothendieck topos  (see \cite[][Theorem 1.8.5.]{rogers2021supercompactly} for a proof). This completes the proof.
\end{proof}

Conversely, for any hyperconnected geometric morphism $h\colon \E \to \F$, from a Grothendieck topos $\E$, the essential image of $h^*$ satisfies the assumption of \cref{lem:CriteriaOfHyperconnected}. So every hyperconnected geometric morphism is constructed by \cref{lem:CriteriaOfHyperconnected}.

\printbibliography
\end{document}